\setlist[enumerate,1]{label=(\roman*)}
\newtheoremstyle{noparens}%
{}{}%
{\itshape}{}%
{\bfseries}{.}%
{ }%
{\thmname{#1}\thmnumber{ #2}\mdseries\thmnote{ #3}}
\newcommand{\bluemark}[1]{{\color{Blue}#1}}
\theoremstyle{noparens}
\newtheorem{theorem}{Theorem}
\newtheorem{symmetry assumption}{Symmetry Assumption}
\newtheorem{lemma}{Lemma}
\newtheorem{corollary}{Corollary}
\newtheorem{proposition}{Proposition}
\newtheorem{definition}{Definition}
\newtheorem{property}{Property}
\begin{document}

\title{Joint Data and Semantics Lossy Compression: Nonasymptotic Converse Bounds and Second-Order Asymptotics}

\author{
	\IEEEauthorblockN{Huiyuan Yang,~\IEEEmembership{Graduate Student Member,~IEEE}, Yuxuan Shi, Shuo Shao,~\IEEEmembership{Member,~IEEE}, Xiaojun Yuan,~\IEEEmembership{Senior Member,~IEEE}}
	\thanks{This work has been submitted to the IEEE for possible publication. Copyright may be transferred without notice, after which this version may no longer be accessible.}
	\thanks{Huiyuan Yang and Xiaojun Yuan \emph{(corresponding author)} are with the National Key Laboratory on Wireless Communications, University of Electronic Science and Technology of China, Chengdu 611731, China (e-mail: hyyang@std.uestc.edu.cn; xjyuan@uestc.edu.cn).}
	\thanks{Yuxuan Shi is with the School of Cyber and Engineering, Shanghai Jiao Tong University, Shanghai 200240, China, and also with ZGC Institute of Ubiquitous-X Innovation and Applications, Beijing 100876, China (e-mail: ge49fuy@sjtu.edu.cn).}
	\thanks{Shuo Shao \emph{(corresponding author)} is with the School of Cyber and Engineering, Shanghai Jiao Tong University, Shanghai 200240, China (e-mail: shuoshao@sjtu.edu.cn).}
}

\maketitle
\IEEEpeerreviewmaketitle

\begin{abstract}

This paper studies the joint data and semantics lossy compression problem, i.e., an extension of the hidden lossy source coding problem that entails recovering both the hidden and observable sources. We aim to study the nonasymptotic and second-order properties of this problem, especially the converse aspect. Specifically, we begin by deriving general nonasymptotic converse bounds valid for general sources and distortion measures, utilizing properties of distortion-tilted information. Subsequently, a second-order converse bound is derived under the standard block coding setting through asymptotic analysis of the nonasymptotic bounds. This bound is tight since it coincides with a known second-order achievability bound. We then examine the case of erased fair coin flips (EFCF), providing its specific nonasymptotic achievability and converse bounds. Numerical results under the EFCF case demonstrate that our second-order asymptotic approximation effectively approximates the optimum rate at given blocklengths.

\end{abstract}

\begin{IEEEkeywords}
	Lossy data compression, nonasymptotic converse bound, second-order asymptotics, semantic communication, lossy source coding, nonasymptotic analysis, finite blocklength, rate-distortion theory.
\end{IEEEkeywords}

\section{Introduction}
\IEEEPARstart{D}{ata} and its semantics are two key objects of semantic communication;
the data is typically observable, while the underlying semantics can only be inferred from the data \cite{Qin2021Semantic, Shi2021From, Yang2023Semantic, Gunduz2023Beyond, Getu2023Making}. As the era unfolds and technology advances, there is a growing demand to generate data representations capable of recovering both the data and its semantics in a lossy sense, as outlined in \cite{Liu2021Rate, Liu2022Indirect, Huang2023Joint, Yang2024Joint}.
In order to address this challenge, a two-step approach might first come to mind, wherein data is first compressed/encoded without considering its semantics, followed by semantic inference based on the recovered data. However, this naive approach is generally suboptimal due to its semantic-agnostic encoding and distorted-data-based semantic inference. Actually, the optimal approach is to advance the semantic inference step to the compressor/encoder, i.e., to achieve joint compression of the data and its semantics by exploiting their known statistical relationship. In this paper, we investigate the problem of joint data and semantics lossy compression (JDSLC).

A block coding version of the JDSLC problem has been formulated in \cite{Liu2021Rate}, in which the asymptotic properties of this problem were investigated. Subsequently, in our prior work \cite{Yang2024Joint}, we defined a general version of the JDSLC problem, termed the general JDSLC problem in this paper, which allowed us to study the nonasymptotic behaviors of JDSLC. Note that in \cite{Yang2024Joint}, we studied the achievability aspect of the general JDSLC problem, presenting nonasymptotic and second-order achievability bounds. In this paper, instead, we focus mainly on the converse aspect.


More precisely, in this paper, we consider that data and its semantics are generated by two correlated sources, termed the data and semantic sources, respectively. 
As semantics are typically inferred from data and cannot be directly observed, we posit that the encoder can only access the data source. Moreover, for practical considerations, we adopt a finite codebook. Therefore, the general JDSLC aims to encode data into a codeword chosen from a finite codebook so that the decoder can recover both the data and its semantics based on the codeword in a lossy sense.
As mentioned in \cite{Yang2024Joint}, our problem formulation adopts a general statistical characterization of the data-semantics relationship, enabling its applicability to a diverse range of semantic communication scenarios \cite{Liu2021Rate, Liu2022Indirect}. Additionally, its nonasymptotic modeling makes it more suitable for practical semantic communication systems with delay and complexity constraints. In this paper, we delve into the converse aspect of this problem, including its nonasymptotic and second-order converse bounds, and illustrate our results using an example of erased fair coin flips.

In the remainder of this section, we provide a brief overview of related works in Subsection \ref{Related_Works_and_Discussions}, followed by an outline of the contributions and organization of this paper in Subsection \ref{Contributions_and_Organization}.




\subsection{Related Works and Discussions}
\label{Related_Works_and_Discussions}
Shortly after Shannon established his mathematical theory of communication \cite{Shannon1948A}, Weaver identified three levels of communication in \cite{Shannon1949The} as follows:
\begin{itemize}
	\item Level A: How accurately can the symbols of communication be transmitted? (The technical problem.)
	\item Level B: How precisely do the transmitted symbols convey the desired meaning? (The semantic problem.)
	\item Level C: How effectively does the received meaning affect conduct in the desired way? (The effectiveness problem.)
\end{itemize}


Although Shannon stated in \cite{Shannon1948A} that ``the semantic aspects of communication are irrelevant to the engineering problem,'' he was actually emphasizing that the semantic aspects do not affect the characterization of the technical problem. As clarified by Weaver in \cite{Shannon1949The}, Shannon's statement ``does not mean that the engineering aspects are necessarily irrelevant to the semantic aspects.''

Indeed, Shannon's information theory is general enough and, therefore, of great potential to provide invaluable insights for addressing semantic problems \cite{Shannon1949The, Gunduz2023Beyond}. There have been several works that attempted to leverage Shannon's information theory to analyze semantic communication problems \cite{Liu2021Rate, Liu2022Indirect, Yuxuan2023Rate, Shi2023Excess, Stavrou2023Role}. Specifically, the authors in \cite{Liu2021Rate} and \cite{Liu2022Indirect} proposed a semantic source model comprising an observable source and a correlated hidden source, with each source dedicated to memorylessly generating data and its embedded semantics, respectively. They considered a lossy source coding problem that seeks to simultaneously recover data and semantics under their respective separable distortion constraints by encoding only the observable source. The corresponding semantic rate-distortion function was then obtained. Note that the problem considered in \cite{Liu2021Rate} and \cite{Liu2022Indirect} is an asymptotic version of our problem and inspires our studies. While \cite{Liu2021Rate} and \cite{Liu2022Indirect} focused on the point-to-point source coding scenario, \cite{Yuxuan2023Rate} and \cite{Shi2023Excess} expanded the scope by applying this semantic modeling approach to the multi-terminal source coding and joint source-channel coding scenarios, respectively. \cite{Stavrou2023Role} proposed a Blahut–Arimoto type algorithm to compute the so-called semantic rate-distortion function. \cite{Li2024Fundamental} further gives some analytical properties of the semantic rate-distortion function and introduces a neural network designed for estimating the semantic rate-distortion function from samples. However, the above works focus on first-order asymptotics, in contrast to the nonasymptotic and second-order asymptotics we are investigating.

Shannon's information theory has also been used to help construct practical semantic communication systems. In fact, most recent practical semantic communication systems are constructed based on deep learning (DL) \cite{Luo2022Semantic, Yang2023Semantic}. With the help of the variational approximation technique \cite{Alex2017Deep} and the reparameterization trick \cite{Kingma2014Auto}, information measures, such as entropy and mutual information, can be (approximatively) optimized using stochastic gradient descent. Accordingly, some information-theoretic expressions that represent the performance limits of semantic communication systems can serve as objective functions to guide system design. Some representative works include \cite{Xie2020Deep, Xie2021Deep, Shao2022Learning, Shao2023TaskOriented}.


From another line of research, finite blocklength and second-order analysis of source coding traces back to the seminal work of Strassen \cite{Volker1962Asymptotische} in 1962, in which he investigated the scenario of almost lossless source coding. After decades of being almost forgotten, there has been a significant revival of research on these characterizations, such as in \cite{Ingber2011Dispersion, Kostina2012Fixed, Kostina2013LossyJoint, Kostina2016Nonasymptotic, Zhou2019NonAsymptotic, Yang2024Joint}. Specifically, in \cite{Ingber2011Dispersion}, the authors found the dispersion of lossy source coding for finite alphabet sources using a type-based approach. The case of i.i.d. Gaussian source with quadratic distortion measure was also treated in \cite{Ingber2011Dispersion}. In \cite{Kostina2012Fixed}, the authors proposed several general nonasymptotic achievability and converse bounds for lossy source coding. Moreover, they derived second-order asymptotics valid for sources with abstract alphabets through asymptotic analysis of the nonasymptotic bounds. The analysis methods employed in \cite{Kostina2012Fixed} are subsequently generalized to apply to scenarios such as lossy joint source-channel coding \cite{Kostina2013LossyJoint}, hidden lossy source coding \cite{Kostina2016Nonasymptotic}, Kaspi problem and Fu-Yeung problem \cite{Zhou2019NonAsymptotic}, and our considered JDSLC problem \cite{Yang2024Joint}. In particular, our prior work \cite{Yang2024Joint} proposed nonasymptotic and second-order achievability bounds for our considered general JDSLC problem, i.e., an extended nonasymptotic hidden lossy source coding problem wherein both the hidden and observable sources are to be recovered. However, the converse aspect of this problem has yet to be thoroughly studied. This paper is intended to fill this gap.


\subsection{Contributions and Organization}
\label{Contributions_and_Organization}
The main contributions of this paper are listed as follows:
\begin{itemize}

	\item We derive general nonasymptotic converse bounds for joint data and semantics lossy compression. These bounds are valid for general sources and distortion measures.
	
	\item For stationary memoryless sources, separable distortion measures, and appropriate maximum admissible distortions, we find the tight second-order converse bound, thereby, the dispersion of joint data and semantics lossy compression.
	
	\item We derive the semantic rate-distortion function of the case of erased fair coin flips. Nonasymptotic achievability and converse bounds tailored for this case are also derived.
	
\end{itemize}
Numerical results demonstrate that our second-order asymptotic results effectively approximate the optimum rate at given blocklengths.


The remainder of this paper is organized as follows. In Section \ref{SEC_2}, we formally define the joint data and semantics lossy problem, then introduce basic notations and some useful properties. In Sections \ref{SEC_3} and \ref{SEC_4}, we show the general nonasymptotic converse bounds and the second-order asymptotics, respectively. The analysis results under the case of erased fair coin flips are derived and depicted in Section \ref{SEC_5}. Finally, Section \ref{SEC_6} concludes the paper.




%

\section{Preliminaries}
\label{SEC_2}

While introducing some new expressions, the notations and definitions in this paper are fully compatible with those in our prior work \cite{Yang2024Joint}. Even so, for the sake of readability and completeness of this paper, we restate them as follows.


\subsection{JDSLC in the General and Block Settings}


\begin{figure*}
	\centering
	\subfigure[An $(M, d_s, d_x, \epsilon)$ code.]{
		\begin{minipage}{0.6\textwidth}
			\includegraphics[width=\textwidth]{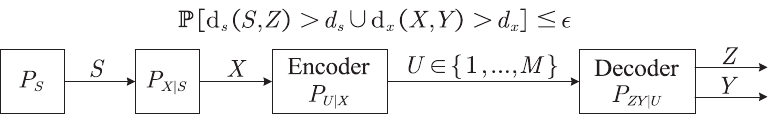} \\
		\end{minipage}}
	\subfigure[A $(k, M, d_s, d_x, \epsilon)$ code.]{
		\begin{minipage}{0.6\textwidth}
			\includegraphics[width=\textwidth]{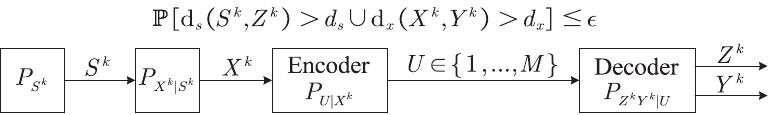} \\
		\end{minipage}}
	\caption{Joint data and semantics lossy compression in the nonasymptotic regime.} 
	\label{JDSLC_figs}
\end{figure*}

We consider that the data and the underlying semantics are drawn from two correlated sources, $S$ and $X$, respectively. Denote the joint distribution of $(S, X)$ as $P_{SX}$, defined on product alphabet $\mathcal{M} \times \mathcal{X}$. Denote the reconstruction alphabets of $S$ and $X$ as $\widehat{\mathcal{M}}$ and $\widehat{\mathcal{X}}$, respectively. We assume finite alphabets $\mathcal{M}$,  $\widehat{\mathcal{M}}$, $\mathcal{X}$, and $\widehat{\mathcal{X}}$, quantifiable data, and semantic distortions. The corresponding distortion measures are denoted as $\textsf{d}_s: \mathcal{M} \times \widehat{\mathcal{M}} \mapsto [0, +\infty)$ and $\textsf{d}_x: \mathcal{X} \times \widehat{\mathcal{X}} \mapsto [0, +\infty)$, respectively. Under the general setting above, we define the $(M, d_s, d_x, \epsilon)$ code as follows.
\begin{definition}
(General):
An $(M, d_s, d_x, \epsilon)$ code 
for $\{\mathcal{M}, \mathcal{X}, \widehat{\mathcal{M}}, \widehat{\mathcal{X}}, P_{SX}, \textsf{d}_s: \mathcal{M} \times \widehat{\mathcal{M}} \mapsto [0, +\infty), \textsf{d}_x: \mathcal{X} \times \widehat{\mathcal{X}} \mapsto [0, +\infty)\}$ 
is a pair of random mappings $P_{U|X}: \mathcal{X} \to \{1, \dots, M\}$ and $P_{ZY|U}: \{1,\dots, M\} \to \widehat{\mathcal{M}}\times \widehat{\mathcal{X}}$ such that the joint excess distortion probability satisfies $\mathbb{P}[\textsf{d}_s(S,Z)>d_s \cup \textsf{d}_x(X,Y)>d_x] \leq \epsilon$.
\end{definition}


The block setting is a special case of the general setting, induced by the specialization that the alphabets $\mathcal{M}$, $\mathcal{X}$, $\widehat{\mathcal{M}}$, and $\widehat{\mathcal{X}}$ are all $k$-fold Cartesian products. Note that $k$ is referred to as the blocklength. In block setting with $\mathcal{M} = \mathcal{S}^k$, $\mathcal{X}=\mathcal{A}^k$, $\widehat{\mathcal{M}} = \hat{\mathcal{S}}^k$, and $\widehat{\mathcal{X}}=\hat{\mathcal{A}}^k$, the $(k, M, d_s, d_x, \epsilon)$ code is defined as follows.
\begin{definition}
	(Block):
	A $(k, M, d_s, d_x, \epsilon)$ code is an $(M, d_s, d_x, \epsilon)$ code for $\{\mathcal{S}^k, \mathcal{A}^k, \hat{\mathcal{S}}^k, \hat{\mathcal{A}}^k, P_{S^kX^k}, \textsf{d}_s: \mathcal{S}^k \times \hat{\mathcal{S}}^k \mapsto [0, +\infty), \textsf{d}_x: \mathcal{A}^k \times \hat{\mathcal{A}}^k \mapsto [0, +\infty)\}$ .
\end{definition}
The nonasymptotic joint data and semantics lossy compression problem in the general and block settings are illustrated in Fig. \ref{JDSLC_figs}.



\subsection{Tilted Information}
The noisy rate-distortion function is defined as \cite{Liu2021Rate}
\begin{subequations}
	\label{rate_distortion_function}
	\begin{align}
	R_{S,X}(d_s,d_x) \triangleq \min_{P_{ZY|X}}\ &I(X;Z,Y) \label{rate_distortion_function_obj} \\
	\mathrm{s.t.}\
	&\mathbb{E}\left[\bar{\textsf{d}}_s(X,Z)\right] \leq d_s, \\
	&\mathbb{E}\left[\textsf{d}_x(X,Y)\right] \leq d_x, 
	\end{align}
\end{subequations}
where $\bar{\textsf{d}}_s: \mathcal{X} \times \widehat{\mathcal{M}} \mapsto [0, +\infty]$ is given by
\begin{equation}
\bar{\textsf{d}}_s(x,z)\triangleq \mathbb{E}[\textsf{d}_s(S, z)| X=x].
\end{equation}
Define $d_{s,\min}\triangleq \mathbb{E}[\min_{z \in \widehat{\mathcal{M}}} \bar{\textsf{d}}_s(X,z)]$ and $d_{x,\min}\triangleq \mathbb{E}[\min_{y \in \widehat{\mathcal{X}}} \textsf{d}_x(X,y)]$. By \cite{Yang2024Joint}, $R_{S,X}(d_s,d_x)$ exists if and only if maximum admissible distortions $(d_s,d_x) \in \mathcal{D}_{\mathrm{adm}} \triangleq \{(d_s,d_x): d_s \geq d_{s,\min}, d_x \geq d_{x,\min}\}$.
We further define
\begin{align}
R_{\tilde{X}}(d_s,d_x) \triangleq R_{S,\tilde{X}}(d_s,d_x),
\end{align}
where $P_{S\tilde{X}} = P_{\tilde{X}}P_{S|X}$. Clearly, $R_{X}(d_s,d_x) = R_{S,X}(d_s,d_x)$.

Let $\mathcal{P}^{\star}(d_s,d_x)$ denote the set of optimal solutions of problem \eqref{rate_distortion_function} with $(d_s,d_x) \in \mathcal{D}_{\mathrm{adm}}$. Define sets
\begin{align}
\mathcal{D}_{sx} \triangleq \{&(d_s,d_x) \in \mathcal{D}_{\mathrm{adm}}: \forall P_{Z^\star Y^\star|X} \in \mathcal{P}^{\star}(d_s,d_x),\nonumber \\ &\mathbb{E}\left[\bar{\textsf{d}}_s(X,Z^\star)\right] = d_s \textrm{ and }  \mathbb{E}\left[\textsf{d}_x(X,Y^\star)\right] = d_x\},
\end{align}
\begin{align}
\mathcal{D}_{\bar{s}x} \triangleq \{&(d_s,d_x)\in \mathcal{D}_{\mathrm{adm}}: \exists P_{Z^\star Y^\star|X} \in \mathcal{P}^{\star}(d_s,d_x) \textrm{ such that} \nonumber \\ &\mathbb{E}\left[\bar{\textsf{d}}_s(X,Z^\star)\right] < d_s;\ \forall P_{Z^\star Y^\star|X} \in \mathcal{P}^{\star}(d_s,d_x), \nonumber \\ &\mathbb{E}\left[\textsf{d}_x(X,Y^\star)\right] = d_x\},
\end{align}
\begin{align}
\mathcal{D}_{s\bar{x}} \triangleq \{&(d_s,d_x)\in \mathcal{D}_{\mathrm{adm}}: \forall P_{Z^\star Y^\star|X} \in \mathcal{P}^{\star}(d_s,d_x), \nonumber \\ &\mathbb{E}\left[\bar{\textsf{d}}_s(X,Z^\star)\right] = d_s;\ \exists P_{Z^\star Y^\star|X} \in \mathcal{P}^{\star}(d_s,d_x) \nonumber \\ &\textrm{ such that } \mathbb{E}\left[\textsf{d}_x(X,Y^\star)\right] < d_x\},
\end{align}
\begin{align}
\mathcal{D}_{\bar{s}\bar{x}} \triangleq \{&(d_s,d_x)\in \mathcal{D}_{\mathrm{adm}}: \exists P_{Z^\star Y^\star|X} \in \mathcal{P}^{\star}(d_s,d_x) \textrm{ such that} \nonumber \\ &\mathbb{E}\left[\bar{\textsf{d}}_s(X,Z^\star)\right] < d_s;\ \exists P_{Z^\star Y^\star|X} \in \mathcal{P}^{\star}(d_s,d_x) \nonumber \\ &\textrm{ such that } \mathbb{E}\left[\textsf{d}_x(X,Y^\star)\right] < d_x\}.
\end{align}
Clearly, $\{\mathcal{D}_{sx}, \mathcal{D}_{\bar{s}x}, \mathcal{D}_{s\bar{x}}, \mathcal{D}_{\bar{s}\bar{x}}\}$ forms a partition of $\mathcal{D}_{\mathrm{adm}}$. As clarified in \cite{Yang2024Joint}, $R_{S,X}(d_s,d_x)$ is differentiable on $\mathcal{D}_{\mathrm{in}} \triangleq \textsf{int}(\mathcal{D}_{sx}) \cup \textsf{int}(\mathcal{D}_{\bar{s}x}) \cup \textsf{int}(\mathcal{D}_{s\bar{x}}) \cup \textsf{int}(\mathcal{D}_{\bar{s}\bar{x}})$, where $\textsf{int}(\cdot)$ denotes the interior of the input set.



For $(d_s,d_x) \in \mathcal{D}_{\mathrm{in}}$, the noisy $(\textsf{d}_s,\textsf{d}_x)$-tilted information in $(s,x) \in \mathcal{M} \times \mathcal{X}$ given representations $z \in \widehat{\mathcal{M}}$ and $y \in \widehat{\mathcal{X}}$ is defined as
\begin{equation}
\label{tilted_information}
\begin{aligned}
\tilde{\jmath}_{S,X}(s,x,z,y,d_s,d_x) \triangleq& \imath_{X;Z^\star Y^\star}(x;z,y) + \lambda_s^\star \textsf{d}_s(s,z)\\ 
&+ \lambda_x^\star \textsf{d}_x(x,y) - \lambda_s^\star d_s - \lambda_x^\star d_x,
\end{aligned}
\end{equation}
where $P_{Z^\star Y^\star| X} \in \mathcal{P}^{\star}(d_s,d_x)$, $P_{Z^\star Y^\star}=\sum_{x \in \mathcal{X}}P_XP_{Z^\star Y^\star| X}$, and
\begin{equation}
\label{information_density}
\imath_{X;Z^\star Y^\star}(x;z,y) \triangleq \log \frac{\mathrm{d} P_{Z^\star Y^\star| X=x}}{\mathrm{d} P_{Z^\star Y^\star}}(z,y),
\end{equation}
\begin{equation}
\label{lambda_s}
\lambda_s^\star \triangleq -\frac{\partial R_{S,X}(d_s,d_x)}{\partial d_s},
\end{equation}
\begin{equation}
\label{lambda_x}
\lambda_x^\star \triangleq -\frac{\partial R_{S,X}(d_s,d_x)}{\partial d_x}.
\end{equation}
For $(d_s,d_x) \in \mathcal{D}_{\mathrm{in}}$, the $(\bar{\textsf{d}}_s,\textsf{d}_x)$-tilted information in $x$ for the surrogate noiseless two-constraint source coding problem \cite[Section VI]{Blahut1972Computation}, \cite[Problem 10.19]{Cover2006Elements} is defined as
\begin{equation}
\begin{aligned}
&\jmath_{X}(x,d_s,d_x)\\ 
\triangleq &\log\! \frac{1}{\mathbb{E}[\exp\{\lambda_s^\star d_s \! + \! \lambda_x^\star d_x \!-\! \lambda_s^\star \bar{\textsf{d}}_s(x,Z^\star) \!-\! \lambda_x^\star \textsf{d}_x(x,Y^\star)\}]},
\end{aligned}
\end{equation}
where the expectation is with respect to the unconditional distribution $P_{Z^\star Y^\star}$. By the differentiability of $R_{S,X}(d_s,d_x)$ on $\mathcal{D}_{\mathrm{in}}$, the tilted informations $\tilde{\jmath}_{S,X}(s,x,z,y,d_s,d_x)$ and $\jmath_{X}(x,d_s,d_x)$ are well-defined. The following properties of $\jmath_{X}(x,d_s,d_x)$ have been proved in \cite{Yang2024Joint}.
\begin{property} 
	\label{tilted_information_property}
	(\cite[Property 1]{Yang2024Joint}): Fix $(d_s,d_x) \in \mathcal{D}_{\mathrm{in}}$. For $P_{Z^\star Y^\star}$-a.e. $(z,y)$, it holds that
	\begin{equation}
	\label{d_tilted_information_of_surrogate}
	\begin{aligned}
	\jmath_{X}(x,d_s,d_x) = &\imath_{X;Z^\star Y^\star}(x;z,y) + \lambda_s^\star \bar{\textsf{d}}_s(x,z)\\ 
	&+ \lambda_x^\star \textsf{d}_x(x,y) - \lambda_s^\star d_s - \lambda_x^\star d_x,
	\end{aligned}
	\end{equation}
	where $P_{XZ^\star Y^\star} = P_X P_{Z^\star Y^\star | X}$. Moreover,
	\begin{align}
	R_{S,X}(d_s,d_x) =& \min_{P_{ZY|X}} \mathbb{E}\big[\imath_{X;ZY}(X;Z,Y) + \lambda_s^\star \bar{\textsf{d}}_s(X,Z) \nonumber \\ 
	&\quad \quad \ \, + \lambda_x^\star \textsf{d}_x(X,Y)\big] - \lambda_s^\star d_s - \lambda_x^\star d_x \label{property1} \\
	=& \mathbb{E}\big[\jmath_{X}(X,d_s,d_x)\big], \label{property3}
	\end{align}
	and for all $z \in \widehat{\mathcal{M}}$ and $y \in \widehat{\mathcal{X}}$,
	\begin{equation}
	\label{property4}
	\begin{aligned}
	&\mathbb{E}\big[\exp\big\{\lambda_s^\star d_s + \lambda_x^\star d_x - \lambda_s^\star \bar{\textsf{d}}_s(X,z) - \lambda_x^\star \textsf{d}_x(X,y)\\ 
	&+ \jmath_{X}(X,d_s,d_x) \big\}\big] \leq 1
	\end{aligned}
	\end{equation}
	with equality for $P_{Z^\star Y^\star}$-a.e. $(z,y)$.
\end{property}

By Property \ref{tilted_information_property}, for $P_{Z^\star Y^\star}$-a.e. $(z,y)$,
\begin{equation}
\label{relation_of_tilted_information}
\tilde{\jmath}_{S,X}\!(s,x,z,y,d_s,d_x) \!=\! \jmath_{X}\!(x,d_s,d_x) + \lambda_s^\star \textsf{d}_s(s,z) - \lambda_s^\star \bar{\textsf{d}}_s(x,z).
\end{equation}
Juxtaposing with \eqref{tilted_information}, we have
\begin{align}
R_{S,X}(d_s,d_x) =& \mathbb{E}[\tilde{\jmath}_{S,X}(S,X,Z^\star,Y^\star,d_s,d_x)] \nonumber \\ 
=& \mathbb{E}[\jmath_{X}(X,d_s,d_x)].
\end{align}
Define the noisy rate-dispersion function as
\begin{equation}
\label{noisy_rate_dispersion_func}
\tilde{\mathcal{V}}(d_s,d_x) \triangleq \textrm{Var}\left[\tilde{\jmath}_{S,X}(S,X,Z^\star,Y^\star,d_s,d_x)\right].
\end{equation}
Similarly, define the rate-dispersion function of the surrogate noiseless problem as
\begin{equation}
\mathcal{V}(d_s,d_x) \triangleq \textrm{Var}\left[\jmath_{X}(X,d_s,d_x)\right].
\end{equation}
The following proposition reveals the relationship between $\tilde{\mathcal{V}}(d_s,d_x)$ and $\mathcal{V}(d_s,d_x)$.
\begin{proposition} \label{Prop_relationship_V_tildeV}
	(\cite[Proposition 1]{Yang2024Joint}):
	$\tilde{\mathcal{V}}(d_s,d_x)$ can be written as
	\begin{equation}
	\label{relationship_V_tildeV}
	\tilde{\mathcal{V}}(d_s,d_x) = \mathcal{V}(d_s,d_x) + \lambda_s^{\star 2}\textrm{Var}\left[\textsf{d}_s(S,Z^\star)|X,Z^\star\right],
	\end{equation}
	where $\textrm{Var}\left[U|V\right] \triangleq \mathbb{E}\left[(U - \mathbb{E}\left[U|V\right])^2\right]$.
\end{proposition}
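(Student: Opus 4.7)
The plan is to start from the identity \eqref{relation_of_tilted_information}, which expresses the noisy tilted information as the surrogate tilted information plus a correction term:
\begin{equation*}
\tilde{\jmath}_{S,X}(S,X,Z^\star,Y^\star,d_s,d_x) = \jmath_{X}(X,d_s,d_x) + \lambda_s^\star \bigl(\textsf{d}_s(S,Z^\star) - \bar{\textsf{d}}_s(X,Z^\star)\bigr).
\end{equation*}
Writing $A \triangleq \jmath_{X}(X,d_s,d_x)$ and $B \triangleq \lambda_s^\star\bigl(\textsf{d}_s(S,Z^\star) - \bar{\textsf{d}}_s(X,Z^\star)\bigr)$, the task reduces to computing $\mathrm{Var}[A + B] = \mathrm{Var}[A] + 2\,\mathrm{Cov}[A,B] + \mathrm{Var}[B]$, verifying that the cross term vanishes, and identifying $\mathrm{Var}[B]$ with the claimed conditional-variance expression.

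For the cross term, I would condition on $(X,Z^\star)$. By the definition $\bar{\textsf{d}}_s(x,z) \triangleq \mathbb{E}[\textsf{d}_s(S,z)\mid X=x]$ and the Markov structure $S \to X \to Z^\star$ (i.e., $P_{Z^\star Y^\star\mid X}$ does not depend on $S$), one gets $\mathbb{E}[\textsf{d}_s(S,Z^\star)\mid X,Z^\star] = \bar{\textsf{d}}_s(X,Z^\star)$, so $\mathbb{E}[B\mid X,Z^\star]=0$. Since $A$ is a function of $X$ alone, it is $\sigma(X,Z^\star)$-measurable, hence $\mathbb{E}[AB] = \mathbb{E}[A\,\mathbb{E}[B\mid X,Z^\star]] = 0$ and $\mathbb{E}[B]=0$, giving $\mathrm{Cov}[A,B]=0$.

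For the variance of $B$, note that $\mathrm{Var}[B] = \mathbb{E}[B^2]$ since $\mathbb{E}[B]=0$, and
\begin{equation*}
\mathbb{E}[B^2] = \lambda_s^{\star 2}\,\mathbb{E}\!\left[\bigl(\textsf{d}_s(S,Z^\star) - \mathbb{E}[\textsf{d}_s(S,Z^\star)\mid X,Z^\star]\bigr)^2\right],
\end{equation*}
which by the stated convention $\mathrm{Var}[U\mid V] \triangleq \mathbb{E}[(U - \mathbb{E}[U\mid V])^2]$ is exactly $\lambda_s^{\star 2}\,\mathrm{Var}[\textsf{d}_s(S,Z^\star)\mid X,Z^\star]$. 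Combining with $\mathrm{Var}[A] = \mathcal{V}(d_s,d_x)$ yields \eqref{relationship_V_tildeV}.

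I do not anticipate a serious obstacle; the only subtlety is being explicit about the two measurability/Markov facts that make the cross term vanish, namely that $\jmath_X(X,d_s,d_x)$ depends only on $X$ and that the noisy observation $S$ is conditionally independent of $Z^\star$ given $X$ (ensuring $\mathbb{E}[\textsf{d}_s(S,Z^\star)\mid X,Z^\star]=\bar{\textsf{d}}_s(X,Z^\star)$). Once these are stated, the computation is a one-line application of the law of total variance.
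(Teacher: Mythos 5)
Your proof is correct. The paper itself defers the proof to \cite[Proposition 1]{Yang2024Joint}, so there is no in-text proof to compare against, but your argument is the natural one implied by \eqref{relation_of_tilted_information}: decompose the noisy tilted information into the surrogate tilted information $A=\jmath_X(X,d_s,d_x)$ (a function of $X$ only) plus the correction $B=\lambda_s^\star(\textsf{d}_s(S,Z^\star)-\bar{\textsf{d}}_s(X,Z^\star))$, use the Markov chain $S\to X\to(Z^\star,Y^\star)$ (which holds because $P_{SXZ^\star Y^\star}=P_{SX}P_{Z^\star Y^\star|X}$) to get $\mathbb{E}[\textsf{d}_s(S,Z^\star)\mid X,Z^\star]=\bar{\textsf{d}}_s(X,Z^\star)$ and hence $\mathbb{E}[B\mid X,Z^\star]=0$, conclude $\mathrm{Cov}[A,B]=0$ since $A$ is $\sigma(X,Z^\star)$-measurable, and identify $\mathbb{E}[B^2]$ with $\lambda_s^{\star 2}\mathrm{Var}[\textsf{d}_s(S,Z^\star)\mid X,Z^\star]$ under the paper's (averaged) conditional-variance convention. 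You also correctly note the only small technical point: \eqref{relation_of_tilted_information} holds for $P_{Z^\star Y^\star}$-a.e.\ $(z,y)$, which suffices since the variance in \eqref{noisy_rate_dispersion_func} is taken with $(Z^\star,Y^\star)\sim P_{Z^\star Y^\star}$.
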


\section{Nonasymptotic Converse Bounds}
\label{SEC_3}
In this section, we provide the general nonasymptotic converse bounds.
For fixed $P_X$ and auxiliary conditional distribution $P_{\bar{X}|\bar{Z}\bar{Y}}$, denote
\begin{align}
	f_{\bar{X}|\bar{Z}\bar{Y}}(s,x,z,y) \triangleq& \imath_{\bar{X}|\bar{Z}\bar{Y}\Vert X}(x;z,y) + \sup_{\lambda_s \geq 0} \lambda_s(\textsf{d}_s(s,z) - d_s) \nonumber \\ 
	&+\sup_{\lambda_x \geq 0} \lambda_x(\textsf{d}_x(x,y) - d_x) \!-\! \log M,
\end{align}
where
\begin{equation}
	\imath_{\bar{X}|\bar{Z}\bar{Y}\Vert X}(x;z,y) \triangleq \log \frac{\mathrm{d} P_{\bar{X}|\bar{Z}=z,\bar{Y}=y}}{\mathrm{d} P_X}(x).
\end{equation}
The nonasymptotic converse results can now be stated as follows.
\begin{theorem} 
	\label{general_converse_theorem}
	(Converse): If an $(M, d_s, d_x, \epsilon)$ code exists, then the following inequality must be hold:
	\begin{align}
	\label{general_converse}
		\epsilon \geq& \inf_{\substack{P_{ZY|X}:\\ \mathcal{X} \to \widehat{\mathcal{M}}\times \widehat{\mathcal{X}}}} \sup_{\substack{P_{\bar{X}|\bar{Z}\bar{Y}}:\\ \widehat{\mathcal{M}}\times \widehat{\mathcal{X}} \to \mathcal{X}}} \sup_{\gamma \geq 0}\big\{\mathbb{P}\left[f_{\bar{X}|\bar{Z}\bar{Y}}(S,X,Z,Y) \geq \gamma\right]\nonumber\\ 
		&- \exp(-\gamma)\big\},
	\end{align}
	where the middle supremum is over those $P_{\bar{X}|\bar{Z}\bar{Y}}$ such that Radon-Nikodym derivative of $P_{\bar{X}|\bar{Z}=z,\bar{Y}=y}$ with respect to $P_X$ at $x$ exists for $P_{ZY|X}P_X$-a.e. $(z,y,x)$.
\end{theorem}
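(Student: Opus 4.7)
The plan is to establish the bound code-by-code. For any fixed $(M,d_s,d_x,\epsilon)$ code with encoder $P_{U|X}:\mathcal{X}\to\{1,\ldots,M\}$ and decoder $P_{ZY|U}$, I would prove the claimed inequality for the induced channel $P_{ZY|X}(z,y|x)=\sum_u P_{U|X}(u|x)P_{ZY|U}(z,y|u)$, and then observe that this particular $P_{ZY|X}$ lies in the feasible set of the outer infimum in \eqref{general_converse}, so the theorem follows by passing to the infimum.

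Step one is to unpack $f_{\bar{X}|\bar{Z}\bar{Y}}$. Each Lagrangian supremum is indicator-like: $\sup_{\lambda_s\geq 0}\lambda_s(\textsf{d}_s(s,z)-d_s)$ equals $+\infty$ on $\{\textsf{d}_s(s,z)>d_s\}$ and $0$ on its complement, and analogously for the $\lambda_x$ term. Hence $f=+\infty$ on the joint excess-distortion event and $f=\imath_{\bar{X}|\bar{Z}\bar{Y}\Vert X}(X;Z,Y)-\log M$ on its complement, yielding the disjoint decomposition
\begin{align*}
\mathbb{P}[f\geq\gamma]=&\,\mathbb{P}[\textsf{d}_s(S,Z)>d_s\cup\textsf{d}_x(X,Y)>d_x]\\
&+\mathbb{P}[\text{no excess},\,\imath_{\bar{X}|\bar{Z}\bar{Y}\Vert X}(X;Z,Y)\geq \log M+\gamma].
\end{align*}
The first term is at most $\epsilon$ by the code's definition. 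Thus, the target inequality $\mathbb{P}[f\geq\gamma]\leq\epsilon+\exp(-\gamma)$ reduces to the information-spectrum bound $\mathbb{P}[\imath_{\bar{X}|\bar{Z}\bar{Y}\Vert X}(X;Z,Y)\geq \log M+\gamma]\leq \exp(-\gamma)$.

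Step two is to establish the moment bound $\mathbb{E}[\exp\{\imath_{\bar{X}|\bar{Z}\bar{Y}\Vert X}(X;Z,Y)\}]\leq M$ under the code's joint distribution, after which Markov's inequality immediately yields the information-spectrum bound. To prove the moment inequality, I would expand the expectation as $\sum_{s,x,u,z,y}P_{SX}(s,x)P_{U|X}(u|x)P_{ZY|U}(z,y|u)\,\mathrm{d}P_{\bar{X}|\bar{Z}=z,\bar{Y}=y}/\mathrm{d}P_X(x)$, cancel $P_X(x)$, marginalize out $s$, interchange the order of summation, use $\int \mathrm{d}P_{\bar{X}|\bar{Z}=z,\bar{Y}=y}=1$ for each fixed $(z,y)$, bound $P_{U|X}(u|x)\leq 1$ uniformly, and finally sum $1$ over the $M$ codewords. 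Combining the two steps gives $\mathbb{P}[f\geq\gamma]-\exp(-\gamma)\leq\epsilon$; taking suprema over $\gamma\geq 0$ and admissible $P_{\bar{X}|\bar{Z}\bar{Y}}$, then infimizing over $P_{ZY|X}$, completes the argument.

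The main obstacle I anticipate is technical bookkeeping rather than a conceptual one. First, I must keep the Radon--Nikodym derivative $\mathrm{d}P_{\bar{X}|\bar{Z}=z,\bar{Y}=y}/\mathrm{d}P_X(x)$ well-defined $P_{ZY|X}P_X$-almost everywhere, which is precisely the absolute-continuity restriction imposed on the middle supremum in the theorem. Second, I need to carefully verify that the decomposition of $\{f\geq\gamma\}$ into excess-distortion and information-spectrum parts is a genuine disjoint partition exhausting the event, so the $\pm\infty$ behavior of the Lagrangian suprema does not cause double counting or miss contributions. Once these items are handled, the overall argument is a standard change-of-measure-plus-Markov template in the spirit of \cite{Kostina2012Fixed}, adapted to the two simultaneous distortion constraints and the noisy-source structure of joint data and semantics compression.
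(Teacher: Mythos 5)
Your proposal is correct and follows essentially the same route as the paper's proof: unpack the Lagrangian suprema as indicators of the joint excess-distortion event, split $\mathbb{P}[f\geq\gamma]$ accordingly, bound the excess term by $\epsilon$, and control the information-spectrum term via Markov's inequality together with the moment bound $\mathbb{E}[\exp\{\imath_{\bar{X}|\bar{Z}\bar{Y}\Vert X}(X;Z,Y)\}]\leq M$, which is established exactly by the change-of-measure cancellation of $P_X$, the factorization through the Markov chain $X-U-(Z,Y)$, and the bound $P_{U|X}\leq 1$. Your statement that the excess term is ``at most $\epsilon$'' is in fact slightly more careful than the paper's chain (which writes an equality there), but the argument and conclusion are the same.
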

\begin{proof}[Proof]
	Following the proof of \cite[Theorem 2]{Kostina2016Nonasymptotic}, let the encoder and the decoder be the random mappings $P_{U|X}$ and $P_{ZY|U}$, respectively, where $U$ takes values in $\{1, \dots, M\}$. Then, for any $\gamma \geq 0$,
	\begin{align}
	&\mathbb{P}\left[f_{\bar{X}|\bar{Z}\bar{Y}}(S,X,Z,Y) \geq \gamma\right]\\ 
	=& \mathbb{P}\left[f_{\bar{X}|\bar{Z}\bar{Y}}(S,X,Z,Y) \!\geq\! \gamma, \textsf{d}_s(S,Z)\! > \! d_s  \cup  \textsf{d}_x(X,Y) \!>\! d_x \right] \nonumber \\
	& + \mathbb{P}\left[f_{\bar{X}|\bar{Z}\bar{Y}}(S,X,Z,Y) \!\geq\! \gamma, \textsf{d}_s(S,Z) \!\leq\! d_s, \textsf{d}_x(X,Y) \!\leq\! d_x\right]\\
	=& \epsilon + \mathbb{P}\big[\imath_{\bar{X}|\bar{Z}\bar{Y}\Vert X}(X;Z,Y) \geq \gamma +\log M \nonumber \\
	&, \textsf{d}_s(S,Z) \leq d_s, \textsf{d}_x(X,Y) \leq d_x\big] \\ 
	\leq& \epsilon + \mathbb{P}\left[\imath_{\bar{X}|\bar{Z}\bar{Y}\Vert X}(X;Z,Y) \geq \gamma +\log M\right]\\
	\leq& \epsilon + \frac{\exp(-\gamma)}{M}\mathbb{E}\left[\exp(\imath_{\bar{X}|\bar{Z}\bar{Y}\Vert X}(X;Z,Y))\right] \label{Markov_Inequality} \\
	\leq& \epsilon + \frac{\exp(-\gamma)}{M}\sum_{u=1}^M \bigg( \int_{z \in \widehat{\mathcal{M}}, y \in \widehat{\mathcal{X}}} \mathrm{d} P_{ZY|U}(z,y|u)\nonumber \\ 
	&\cdot\int_{x \in \mathcal{X}} \mathrm{d} P_{\bar{X}|\bar{Z}\bar{Y}}(x|z,y)\bigg) \label{Markov_chain_plus_less_than_1} \\
	=& \epsilon + \exp(-\gamma).
	\end{align}
	where \eqref{Markov_Inequality} is by  Markov's inequality, and \eqref{Markov_chain_plus_less_than_1} holds since $X-U-(Z,Y)$ forms a Markov chain in this order and $P_{U|X}(u|x)\leq 1$ for all $(x,u) \in \mathcal{M}\times \{1,\dots, M\}$.
\end{proof}
The following two corollaries of Theorem \ref{general_converse_theorem} form the basis of our second-order analysis.
\begin{corollary} 
	\label{converse_bound_corollary}
	(Converse):
	Any $(M, d_s, d_x, \epsilon)$ code must satisfy
	\begin{align}
	\label{converse_bound2}
	\epsilon \geq& \sup_{\substack{P_{\bar{X}|\bar{Z}\bar{Y}}:\\ \widehat{\mathcal{M}}\times \widehat{\mathcal{X}} \to \mathcal{X}}} \sup_{\gamma \geq 0} \Big\{\mathbb{E}\Big[\inf_{z \in \widehat{\mathcal{M}}, y \in \widehat{\mathcal{X}}}\mathbb{P}\left[f_{\bar{X}|\bar{Z}\bar{Y}}(S,X,z,y) \geq \gamma | X\right]\Big] \nonumber \\ 
	&- \exp(-\gamma)\Big\},
	\end{align}
	where the first supremum is over those $P_{\bar{X}|\bar{Z}\bar{Y}}$ such that Radon-Nikodym derivative of $P_{\bar{X}|\bar{Z}=z,\bar{Y}=y}$ with respect to $P_X$ at $x$ exists for every $z \in \widehat{\mathcal{M}}$, $y \in \widehat{\mathcal{X}}$ and $P_X$-a.e. $x$.
\end{corollary}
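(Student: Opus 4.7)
The plan is to derive Corollary \ref{converse_bound_corollary} as a refinement of Theorem \ref{general_converse_theorem}, rather than starting from scratch: I will reuse the intermediate chain of inequalities in the proof of Theorem \ref{general_converse_theorem} and then ``push'' the minimization over $(z,y)$ inside the probability by conditioning on $X$ and exploiting the Markov structure $S - X - U - (Z,Y)$.

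First, I fix any $(M, d_s, d_x, \epsilon)$ code with encoder $P_{U|X}$ and decoder $P_{ZY|U}$, and fix $\gamma \geq 0$ together with any $P_{\bar{X}|\bar{Z}\bar{Y}}$ from the corollary's admissible class. Because that class requires the Radon--Nikodym derivative to exist for \emph{every} $(z,y)$ and $P_X$-a.e.\ $x$, it is automatically a subset of the class appearing in Theorem \ref{general_converse_theorem} for the $P_{ZY|X}$ induced by this particular code. Retracing the first lines of the proof of Theorem \ref{general_converse_theorem} (from the split on $\{\textsf{d}_s(S,Z)\le d_s,\,\textsf{d}_x(X,Y)\le d_x\}$ through Markov's inequality) therefore gives the intermediate inequality
\begin{equation*}
\mathbb{P}[f_{\bar{X}|\bar{Z}\bar{Y}}(S,X,Z,Y) \geq \gamma] \leq \epsilon + \exp(-\gamma).
\end{equation*}

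Next I condition on $X$. Since $S - X - U - (Z,Y)$ is a Markov chain, $S$ and $(Z,Y)$ are conditionally independent given $X$, and $(Z,Y)\mid X=x$ has the code-induced distribution $P_{ZY|X=x}$. Hence
\begin{equation*}
\mathbb{P}\bigl[f_{\bar{X}|\bar{Z}\bar{Y}}(S,X,Z,Y)\ge\gamma\bigm|X=x\bigr]
=\int \mathbb{P}\bigl[f_{\bar{X}|\bar{Z}\bar{Y}}(S,x,z,y)\ge\gamma\bigm|X=x\bigr]\,dP_{ZY|X=x}(z,y),
\end{equation*}
and the integrand is pointwise bounded below by $\inf_{z\in\widehat{\mathcal{M}},\,y\in\widehat{\mathcal{X}}}\mathbb{P}[f_{\bar{X}|\bar{Z}\bar{Y}}(S,x,z,y)\ge\gamma\mid X=x]$. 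Taking expectation over $X$ and combining with the intermediate bound yields
\begin{equation*}
\epsilon+\exp(-\gamma)\ge\mathbb{E}\Bigl[\inf_{z,y}\mathbb{P}\bigl[f_{\bar{X}|\bar{Z}\bar{Y}}(S,X,z,y)\ge\gamma\bigm|X\bigr]\Bigr],
\end{equation*}
and rearranging, then taking the suprema over $\gamma\ge 0$ and over admissible $P_{\bar{X}|\bar{Z}\bar{Y}}$, produces the bound in \eqref{converse_bound2}.

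The one subtlety, and the only place where the hypothesis of Corollary \ref{converse_bound_corollary} genuinely differs from Theorem \ref{general_converse_theorem}, is that the inner infimum ranges over \emph{all} $(z,y)\in\widehat{\mathcal{M}}\times\widehat{\mathcal{X}}$, including pairs outside the support of the code-induced $P_{ZY}$. For the expression $f_{\bar{X}|\bar{Z}\bar{Y}}(S,X,z,y)$ to be well defined at such pairs, the Radon--Nikodym derivative in $\imath_{\bar{X}|\bar{Z}\bar{Y}\Vert X}$ must exist for every $(z,y)$ and $P_X$-a.e.\ $x$; this is exactly the strengthened constraint imposed on the outer supremum in the statement of the corollary. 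Beyond this measurability bookkeeping, no additional difficulty arises: the real work sits inside Theorem \ref{general_converse_theorem}, and Corollary \ref{converse_bound_corollary} is essentially a conditioning-plus-infimum refinement of that result.
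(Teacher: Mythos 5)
Your proof is correct and follows essentially the same approach as the paper: both condition on $X$, exploit the Markov chain $S-X-U-(Z,Y)$ to express the conditional probability as a $P_{ZY|X}$-average of $\mathbb{P}[f_{\bar{X}|\bar{Z}\bar{Y}}(S,x,z,y)\geq\gamma\mid X=x]$, and lower-bound that average by the pointwise infimum over $(z,y)$. The only stylistic difference is that you re-derive the intermediate inequality $\mathbb{P}[f_{\bar{X}|\bar{Z}\bar{Y}}(S,X,Z,Y)\geq\gamma]\leq\epsilon+\exp(-\gamma)$ from inside the proof of Theorem \ref{general_converse_theorem}, whereas the paper cites \eqref{general_converse} as a black box and then applies the max--min inequality; the two routes are logically equivalent.
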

\begin{proof}[Proof]
	We weaken \eqref{general_converse} using the max–min inequality and
	\begin{align}
	&\inf_{P_{ZY|X}}\mathbb{P}\left[f_{\bar{X}|\bar{Z}\bar{Y}}(S,X,Z,Y) \geq \gamma\right]\\
	=& \inf_{P_{ZY|X}}\mathbb{E}\left[\mathbb{P}\left[f_{\bar{X}|\bar{Z}\bar{Y}}(S,X,Z,Y) \geq \gamma | X\right]\right]\\
	=&\mathbb{E}\Big[\inf_{z \in \widehat{\mathcal{M}}, y \in \widehat{\mathcal{X}}}\mathbb{P}\left[f_{\bar{X}|\bar{Z}\bar{Y}}(S,X,z,y) \geq \gamma | X\right]\Big].
	\end{align}
\end{proof}

\begin{corollary} 
	\label{converse_bound_corollary2}
	(Converse):
	If $\lambda_s^\star = 0$, any $(M, d_s, d_x, \epsilon)$ code must satisfy
	\begin{equation}
	\label{converse_bound3}
	\epsilon \geq \sup_{\gamma \geq 0} \left\{\mathbb{P}\left[\jmath_{X}(X,d_s,d_x) - \log M \geq \gamma\right] - \exp(-\gamma)\right\}.
	\end{equation}
\end{corollary}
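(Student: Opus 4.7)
The plan is to apply Corollary~\ref{converse_bound_corollary} with a judicious choice of the auxiliary conditional $P_{\bar{X}|\bar{Z}\bar{Y}}$ that exploits the hypothesis $\lambda_s^\star = 0$. The goal is to engineer the choice so that $f_{\bar{X}|\bar{Z}\bar{Y}}(s,x,z,y) \geq \jmath_{X}(x,d_s,d_x) - \log M$ pointwise, after which the desired bound follows immediately.

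First, I would set
\begin{equation*}
P_{\bar{X}|\bar{Z}\bar{Y}}(x|z,y) \triangleq \frac{P_X(x)\exp\{\lambda_x^\star(d_x - \textsf{d}_x(x,y)) + \jmath_{X}(x,d_s,d_x)\}}{N(y)},
\end{equation*}
where $N(y) \triangleq \mathbb{E}[\exp\{\lambda_x^\star(d_x - \textsf{d}_x(X,y)) + \jmath_{X}(X,d_s,d_x)\}]$. The key feature of this construction is that it does not depend on $z$. Plugging $\lambda_s^\star = 0$ into \eqref{property4} collapses that inequality to $N(y) \leq 1$ for every $y$, so this gives a valid conditional distribution whose Radon-Nikodym derivative against $P_X$ is well defined, and moreover $-\log N(y) \geq 0$.

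Next, I would read off
\begin{equation*}
\imath_{\bar{X}|\bar{Z}\bar{Y}\Vert X}(x;z,y) = \lambda_x^\star(d_x - \textsf{d}_x(x,y)) + \jmath_{X}(x,d_s,d_x) - \log N(y),
\end{equation*}
substitute into $f_{\bar{X}|\bar{Z}\bar{Y}}$, and verify the pointwise inequality $f_{\bar{X}|\bar{Z}\bar{Y}}(s,x,z,y) \geq \jmath_{X}(x,d_s,d_x) - \log M$. This uses three facts: $-\log N(y) \geq 0$; $\sup_{\lambda_s \geq 0}\lambda_s(\textsf{d}_s(s,z) - d_s) \geq 0$ trivially; and $\lambda_x^\star(d_x - \textsf{d}_x(x,y)) + \sup_{\lambda_x \geq 0}\lambda_x(\textsf{d}_x(x,y) - d_x) \geq 0$, which is immediate by splitting into the case $\textsf{d}_x(x,y) \leq d_x$ (first term nonnegative since $\lambda_x^\star \geq 0$, supremum zero) and the case $\textsf{d}_x(x,y) > d_x$ (supremum is $+\infty$).

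Finally, because the lower bound depends only on $x$, one has $\inf_{z,y}\mathbb{P}[f_{\bar{X}|\bar{Z}\bar{Y}}(S,X,z,y) \geq \gamma \mid X] \geq \mathbf{1}\{\jmath_{X}(X,d_s,d_x) - \log M \geq \gamma\}$. Taking expectation over $X$ and substituting into \eqref{converse_bound2} yields \eqref{converse_bound3}. I expect the only real obstacle to be guessing the right exponential tilt in the construction of $P_{\bar{X}|\bar{Z}\bar{Y}}$; once this tilt is in hand, the remaining steps are mechanical, with the hypothesis $\lambda_s^\star = 0$ doing all the heavy lifting by making $N(y)$ independent of $z$, so that a $z$-free conditional suffices.
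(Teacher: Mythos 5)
Your proof is correct, and it lands on essentially the same choice as the paper's, though packaged more carefully. The paper weakens Theorem~\ref{general_converse_theorem} directly by setting $P_{\bar X|\bar Z\bar Y}=P_{X|Z^\star Y^\star}$, $\lambda_s=\lambda_s^\star$, $\lambda_x=\lambda_x^\star$, then invokes the on-support \emph{identity} \eqref{d_tilted_information_of_surrogate} (with $\lambda_s^\star=0$ letting $\textsf{d}_s$ and $\bar{\textsf{d}}_s$ be interchanged) to collapse the integrand to $\jmath_X(X,d_s,d_x)-\log M$ and discard the infimum over $P_{ZY|X}$. You instead go through Corollary~\ref{converse_bound_corollary} and build the auxiliary conditional explicitly as an exponential tilt of $P_X$ with normalizer $N(y)$, then use the \emph{inequality} \eqref{property4} to get $N(y)\le1$ and a pointwise lower bound $f_{\bar X|\bar Z\bar Y}\ge \jmath_X-\log M$ valid for every $(z,y)$. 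On $\mathrm{supp}(P_{Z^\star Y^\star})$ your construction coincides with $P_{X|Z^\star Y^\star}$ and $N(y)=1$, so the two routes agree there; the gain of your version is that it is also defined off the support (where $P_{X|Z^\star Y^\star}$ and $\imath_{X;Z^\star Y^\star}$ are not), so the restriction on the auxiliary conditional in Corollary~\ref{converse_bound_corollary} (R-N derivative must exist for \emph{every} $z,y$) is met without comment. This is a modest but real tightening of the argument; the paper implicitly appeals to the a.e.\ identity while ranging over arbitrary $P_{ZY|X}$, which you sidestep by establishing the pointwise inequality directly. All the individual checks in your write-up (validity of the tilted conditional, $N(y)\le1$, sign of each term in $f$, and the final passage through the infimum and expectation) are correct.
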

\begin{proof}[Proof]
By setting $P_{\bar{X}|\bar{Z}\bar{Y}} = P_{X|Z^\star Y^\star}$, $\lambda_s = \lambda_s^\star$ and $\lambda_x = \lambda_x^\star$ in \eqref{general_converse}, any $(M, d_s, d_x, \epsilon)$ code must satisfy
\begin{align}
\epsilon \geq &\inf_{\substack{P_{ZY|X}:\\ \mathcal{X} \to \widehat{\mathcal{M}}\times \widehat{\mathcal{X}}}} \sup_{\gamma \geq 0}\big\{\mathbb{P}\big[\imath_{X; Z^\star Y^\star}(x;z,y) + \lambda_s^\star(\textsf{d}_s(s,z) - d_s)\nonumber\\
&+ \lambda_x^\star (\textsf{d}_x(x,y) - d_x) - \log M \geq \gamma\big]
- \exp(-\gamma)\big\}. \label{converse_C2_1}
\end{align}
Since $\lambda_s^\star = 0$, we can replace the term $\lambda_s^\star(\textsf{d}_s(s,z) - d_s)$ in \eqref{converse_C2_1} by $\lambda_s^\star(\bar{\textsf{d}}_s(x,z) - d_s)$. Combining this with \eqref{d_tilted_information_of_surrogate}, we have
\begin{align}
\epsilon \geq& \inf_{\substack{P_{ZY|X}:\\ \mathcal{X} \to \widehat{\mathcal{M}}\times \widehat{\mathcal{X}}}}\!\! \sup_{\gamma \geq 0} \left\{\mathbb{P}\left[\jmath_{X}(X,d_s,d_x) \!-\! \log M \geq \gamma\right] \!-\! \exp(-\gamma)\right\}\\
=& \sup_{\gamma \geq 0} \left\{\mathbb{P}\left[\jmath_{X}(X,d_s,d_x) - \log M \geq \gamma\right] - \exp(-\gamma)\right\}.
\end{align}
\end{proof}

\section{Asymptotic Analysis}
\label{SEC_4}

In this section, we study second-order asymptotics in blocklength $k$ under the block setting. Recall that in this setting, $\mathcal{M} = \mathcal{S}^k$, $\mathcal{X}=\mathcal{A}^k$, $\widehat{\mathcal{M}} = \hat{\mathcal{S}}^k$, and $\widehat{\mathcal{X}}=\hat{\mathcal{A}}^k$. We make the following assumptions.
\begin{enumerate}
	\item 
	\label{memoryless_sources}
	(Stationary Memoryless Sources):
	$P_{S^k X^k} = P_S P_{X|S} \times \dots \times P_S P_{X|S}$. 
	
	\item 
	\label{separable_distortion}
	(Separable Distortion Measures):
	\begin{equation}
	\textsf{d}_s(s^k,z^k) = \frac{1}{k}\sum_{i=1}^k \textsf{d}_s(s_i,z_i),
	\end{equation}
	\begin{equation}
	\textsf{d}_x(x^k,y^k) = \frac{1}{k}\sum_{i=1}^k \textsf{d}_x(x_i,y_i).
	\end{equation}

	\item
	\label{Finite_alphabets_for_asymptotic}
	(Finite Alphabets):
	The alphabets $\mathcal{S}$, $\mathcal{A}$, $\hat{\mathcal{S}}$, $\hat{\mathcal{A}}$ are finite sets. 
	
	\item
	\label{differentiability}
	(Differentiability): 
	For all $P_{\bar{X}}$ in some neighborhood of $P_X$, $\textsf{supp}(P_{\bar{Z}^\star \bar{Y}^\star}) = \textsf{supp}(P_{Z^\star Y^\star})$, where $P_{\bar{Z}^\star \bar{Y}^\star}$ achieves $R_{\bar{X}}(d_s,d_x)$; $R_{\bar{X}}(d_s,d_x)$ is twice continuously differentiable with respect to $P_{\bar{X}}$.
	
	\item
	\label{Assumption_positive_definite}
	(Non-Degenerate Single-Letter Distortion Measures):
	There exist possibly identical elements $x_1,x_2 \in \mathcal{A}$ such that functions $\bar{\textsf{d}}_s(x_1,\cdot)$ and $\textsf{d}_x(x_2,\cdot)$ vary non-constantly across the elements in $\hat{\mathcal{S}}$ and $\hat{\mathcal{A}}$, respectively. 
\end{enumerate}
The following theorem is derived through an asymptotic analysis of Corollarys \ref{converse_bound_corollary} and \ref{converse_bound_corollary2}.
\begin{theorem} 
	\label{theorem_Gaussian_approximation}
	(Converse, Second-Order Asymptotics): Under assumptions \ref{memoryless_sources}-\ref{Assumption_positive_definite}, fixing $(d_s, d_x) \in \mathcal{D}_{\mathrm{in}}$ and $0 < \epsilon < 1$, any codebook size $M$ compatible with excess distortion constraint $\mathbb{P}[\textsf{d}_s(S^k,Z^k)>d_s \cup \textsf{d}_x(X^k,Y^k)>d_x] \leq \epsilon$ must satisfy
	\begin{align}
	\log M \geq k R_{S,X}(d_s,d_x)+ \sqrt{k \tilde{\mathcal{V}}(d_s,d_x)} Q^{-1}(\epsilon) + O(\log k),
	\end{align}
	where $Q^{-1}(\cdot)$ denotes the inverse of the complementary standard Gaussian cumulative distribution function, and $f(k)=O(g(k))$ means $\lim\sup_{k \to \infty}\left|f(k)/g(k)\right| < \infty$.
\end{theorem}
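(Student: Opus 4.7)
The plan is to perform an asymptotic analysis of the nonasymptotic converses in Corollaries \ref{converse_bound_corollary} and \ref{converse_bound_corollary2} via the Berry-Esseen theorem. It is natural to split the argument into the case $\lambda_s^\star = 0$ and the case $\lambda_s^\star > 0$, because only in the latter does the hidden semantic source contribute nontrivially to the dispersion.

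In the case $\lambda_s^\star = 0$, Corollary \ref{converse_bound_corollary2} applies directly to the block problem. Under Assumptions \ref{memoryless_sources}--\ref{separable_distortion} the optimal single-letter reverse channel of the surrogate noiseless two-constraint problem tensorizes, and consequently $\jmath_{X^k}(X^k, d_s, d_x) = \sum_{i=1}^k \jmath_X(X_i, d_s, d_x)$: a sum of $k$ i.i.d.\ random variables with mean $k R_{S,X}(d_s,d_x)$ by \eqref{property3} and variance $k\mathcal{V}(d_s,d_x) = k\tilde{\mathcal{V}}(d_s,d_x)$, where the last equality follows from Proposition \ref{Prop_relationship_V_tildeV} since $\lambda_s^\star = 0$. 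Positivity of $\mathcal{V}(d_s,d_x)$ is ensured by Assumption \ref{Assumption_positive_definite}. Choosing $\gamma = \tfrac{1}{2}\log k$ so that $\exp(-\gamma) = O(k^{-1/2})$ and applying the Berry-Esseen theorem converts Corollary \ref{converse_bound_corollary2} into the claimed second-order inequality with $O(\log k)$ slack.

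The main case is $\lambda_s^\star > 0$, for which we invoke Corollary \ref{converse_bound_corollary} with a type-dependent auxiliary distribution $P_{\bar{X}|\bar{Z}\bar{Y}}$, adapting the strategy of \cite{Kostina2016Nonasymptotic}. For each empirical type $\hat{T}$ on $\mathcal{A}$ let $P_{\bar{X}|\bar{Z}\bar{Y}}^{(\hat{T})}$ be the product over coordinates of the single-letter reverse channel optimal for the rate-distortion problem with source law $\hat{T}$; $P_{\bar{X}|\bar{Z}\bar{Y}}$ is then assembled piecewise according to the empirical type of its argument. Setting $\lambda_s = \lambda_s^\star$ and $\lambda_x = \lambda_x^\star$ inside $f_{\bar{X}|\bar{Z}\bar{Y}}$ and exploiting separability, the integrand of Corollary \ref{converse_bound_corollary} becomes, on the distortion-feasible event, a sum of $k$ per-letter noisy tilted informations attached to the type-$\hat{T}$ single-letter problem.

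The hard part is controlling the inner infimum over $(z^k, y^k)$ in Corollary \ref{converse_bound_corollary}: one needs the per-coordinate reconstructions to lie in the support of the optimal output measure so that the equality form of the tilted information is available. We expect to handle this by combining the inequality \eqref{property4} applied to source law $\hat{T}$ with the regularity granted by Assumption \ref{differentiability}, thereby reducing the infimum to one over supported reconstructions where \eqref{relation_of_tilted_information} and \eqref{d_tilted_information_of_surrogate} are usable. A Taylor expansion of $R_{\hat{T}}(d_s,d_x)$ about $P_X$ then produces a linear term that is a sum of $k$ i.i.d.\ random variables with mean $kR_{S,X}(d_s,d_x)$ and variance $k\tilde{\mathcal{V}}(d_s,d_x)$ (positive by Assumption \ref{Assumption_positive_definite}) plus a quadratic remainder that is uniformly $O(1)$ on the polynomially many non-empty types. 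A Berry-Esseen bound inside each type, averaging over types with binomial type weights, and the $\exp(-\gamma) = O(k^{-1/2})$ slack from $\gamma = \tfrac{1}{2}\log k$ combine to yield $\epsilon \geq Q\!\bigl(\tfrac{\log M - kR_{S,X}(d_s,d_x)}{\sqrt{k\tilde{\mathcal{V}}(d_s,d_x)}}\bigr) - O(k^{-1/2})$, which rearranges into the claim. The matching achievability bound of \cite{Yang2024Joint} confirms that this converse is tight.
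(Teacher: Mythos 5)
Your high-level plan (split $\lambda_s^\star=0$ versus $\lambda_s^\star>0$, use Corollary~\ref{converse_bound_corollary2} respectively Corollary~\ref{converse_bound_corollary}, a type-based analysis, Taylor expansion of $R_{\bar{X}}(d_s,d_x)$ around $P_X$, and Berry--Ess\'een) matches the paper's proof, and the $\lambda_s^\star=0$ case is handled correctly. The gap is in the construction of the auxiliary channel for the $\lambda_s^\star>0$ case.

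You propose to build $P_{\bar{X}|\bar{Z}\bar{Y}}$ ``piecewise according to the empirical type of its argument.'' The argument of this conditional kernel is $(z^k,y^k)$, whose empirical type lives on $\hat{\mathcal{S}}\times\hat{\mathcal{A}}$; but the reverse channel you want to select is indexed by a type $\hat{T}$ on $\mathcal{A}$ (the type of $x^k$). Since $x^k$ is the \emph{output} of the kernel, and since in Corollary~\ref{converse_bound_corollary} the supremum over $P_{\bar{X}|\bar{Z}\bar{Y}}$ is taken \emph{before} the inner expectation over $X^k$ and infimum over $(z^k,y^k)$, the kernel cannot be allowed to depend on the realization of $X^k$. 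So this construction is not well-defined. The paper circumvents this by instead choosing $P_{\bar{X}^k|\bar{Z}^k=z^k,\bar{Y}^k=y^k}$ to be the \emph{uniform mixture} over all conditional $k$-types $\hat{\mathcal{S}}\times\hat{\mathcal{A}}\to\mathcal{A}$ of the corresponding i.i.d.\ reverse channels, and then, once $\textrm{type}(x^k)=P_{\bar{X}}$ is fixed inside the inner probability, lower-bounds the mixture by the single summand matching $P_{\bar{X}}P_{\bar{Z}\bar{Y}|\bar{X}}$, paying only a $\log|\mathcal{P}_{[k]}|=O(\log k)$ penalty. This is the step your proposal is missing, and it is not a cosmetic fix.

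Two further points you should be aware of. First, the paper uses type-dependent Lagrange multipliers $\lambda_s(x^k)=-\partial R_{\textrm{type}(x^k)}(d_s,d_x)/\partial d_s$ and $\lambda_x(x^k)=-\partial R_{\textrm{type}(x^k)}(d_s,d_x)/\partial d_x$ (which is legal since in $f_{\bar{X}|\bar{Z}\bar{Y}}$ the $\lambda$'s appear inside a supremum that may be weakened pointwise), precisely so that the conditional mean $\mu_k$ of the resulting summands equals $R_{\bar{X}}(d_s,d_x)$ exactly, after which Lemma~\ref{finite_alphabet_derivative} and the Taylor expansion \eqref{Taylor_apply}--\eqref{R_X_bar_lower_bound} apply cleanly. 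Your choice of constant $\lambda_s^\star,\lambda_x^\star$ would leave an extra $O(\|P_{\bar{X}}-P_X\|^2)$ mismatch that needs an envelope-theorem argument to dispose of; it is likely salvageable on the typical set but is more work than you acknowledge. Second, controlling the inner infimum over $(z^k,y^k)$ cannot be done via \eqref{property4} alone: the paper relies on the uniform-over-test-channels Berry--Ess\'een lower bound of \cite[Theorem 11]{Kostina2016Nonasymptotic} to pass from $\min_{P_{\bar{Z}\bar{Y}|\bar{X}}}$ of an exact probability to a $Q$-function evaluated at $\mu_k^\star,V_k^\star$, and this lemma is the technical heart of the converse.
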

\begin{proof}[Proof]
	See Appendix \ref{proof_converse_theorem_Gaussian_approximation}.
\end{proof}

Define the minimum achievable codebook size at blocklength $k$, maximum admissible distortions $(d_s,d_x)$, and joint excess distortion probability $\epsilon$ as
\begin{equation}
    M^\star(k, d_s, d_x, \epsilon) \triangleq \min \{M: \exists (k, M, d_s, d_x, \epsilon)\ \textrm{code}\}.
\end{equation}
Then, we have the following dispersion theorem for joint data and semantics lossy compression.
\begin{theorem} 
	\label{theorem_dispersion}
	(Second-Order Asymptotics): Under assumptions \ref{memoryless_sources}-\ref{Assumption_positive_definite}, fixing $(d_s, d_x) \in \textsf{int}(\mathcal{D}_{sx})$ and $0 < \epsilon < 1$, the minimum achievable codebook size $M^\star(k, d_s, d_x, \epsilon)$ satisfies
	\begin{align}
	\log M^\star(k, d_s, d_x, \epsilon) =& k R_{S,X}(d_s,d_x)+ \sqrt{k \tilde{\mathcal{V}}(d_s,d_x)} Q^{-1}(\epsilon) \nonumber \\ 
	&+ O(\log k).
	\end{align}
\end{theorem}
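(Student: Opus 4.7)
The plan is to prove Theorem \ref{theorem_dispersion} by sandwiching $\log M^\star(k, d_s, d_x, \epsilon)$ between a matching converse and achievability bound, each of which is already largely in hand. The converse direction is exactly the content of Theorem \ref{theorem_Gaussian_approximation}, and the achievability direction is the second-order achievability result established in our prior work \cite{Yang2024Joint}. No new heavy machinery is needed; the task is to verify that both sides give the same leading and second-order terms under the present hypotheses.

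For the lower bound, I would first observe that $\textsf{int}(\mathcal{D}_{sx}) \subseteq \mathcal{D}_{\mathrm{in}}$ by the definition of $\mathcal{D}_{\mathrm{in}}$. Therefore Assumptions \ref{memoryless_sources}--\ref{Assumption_positive_definite} together with $(d_s,d_x) \in \textsf{int}(\mathcal{D}_{sx})$ satisfy the premise of Theorem \ref{theorem_Gaussian_approximation}, which immediately yields
\begin{equation*}
\log M^\star(k, d_s, d_x, \epsilon) \geq k R_{S,X}(d_s,d_x) + \sqrt{k \tilde{\mathcal{V}}(d_s,d_x)}\, Q^{-1}(\epsilon) + O(\log k).
\end{equation*}
For the matching upper bound, I would invoke the second-order achievability bound from \cite{Yang2024Joint}, which under the same stationary memoryless, separable-distortion, finite-alphabet, differentiability, and non-degeneracy assumptions delivers the reverse inequality with the same Gaussian-approximation form. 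Combining the two inequalities and absorbing any $O(\log k)$ discrepancies proves the claimed equality.

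The one conceptually nontrivial point is the restriction $(d_s,d_x) \in \textsf{int}(\mathcal{D}_{sx})$, which is strictly smaller than the domain $\mathcal{D}_{\mathrm{in}}$ on which the converse Theorem \ref{theorem_Gaussian_approximation} is stated. I would justify this restriction as follows: on $\mathcal{D}_{sx}$ both distortion constraints are binding at every optimizer of \eqref{rate_distortion_function}, so both Lagrange multipliers $\lambda_s^\star$ and $\lambda_x^\star$ are strictly positive. Only in this regime does the noisy rate-dispersion $\tilde{\mathcal{V}}(d_s,d_x)$ appearing in the converse coincide with the dispersion term produced by the achievability bound of \cite{Yang2024Joint}. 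On the companion sets $\mathcal{D}_{\bar{s}x}$, $\mathcal{D}_{s\bar{x}}$, $\mathcal{D}_{\bar{s}\bar{x}}$, one of the multipliers may vanish and Proposition \ref{Prop_relationship_V_tildeV} shows that the two variances disagree (e.g., when $\lambda_s^\star = 0$, the correct dispersion collapses from $\tilde{\mathcal{V}}$ to $\mathcal{V}$, matching the weaker converse in Corollary \ref{converse_bound_corollary2}).

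The main obstacle, therefore, is not an analytic difficulty but rather the bookkeeping needed to confirm that the achievability and converse expansions share the same constant $\tilde{\mathcal{V}}(d_s,d_x)$ inside $\textsf{int}(\mathcal{D}_{sx})$; once this is checked the two $O(\log k)$ remainders combine to give the tight dispersion characterization. Outside $\textsf{int}(\mathcal{D}_{sx})$ the characterization would require a case split mirroring the partition $\{\mathcal{D}_{sx}, \mathcal{D}_{\bar{s}x}, \mathcal{D}_{s\bar{x}}, \mathcal{D}_{\bar{s}\bar{x}}\}$, which is why the theorem is stated only in the interior of the fully-active regime.
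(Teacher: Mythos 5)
Your proof is correct and matches the paper's: sandwich $\log M^\star(k,d_s,d_x,\epsilon)$ between the converse of Theorem \ref{theorem_Gaussian_approximation} and the second-order achievability bound of \cite{Yang2024Joint}, both valid on $\textsf{int}(\mathcal{D}_{sx}) \subseteq \mathcal{D}_{\mathrm{in}}$. One slip in your tangential remark: when $\lambda_s^\star = 0$, Proposition \ref{Prop_relationship_V_tildeV} gives $\tilde{\mathcal{V}}(d_s,d_x) = \mathcal{V}(d_s,d_x)$ — the two variances coincide rather than disagree — so the restriction to $\textsf{int}(\mathcal{D}_{sx})$ is inherited from the domain on which the achievability bound of \cite{Yang2024Joint} is stated, not from a dispersion mismatch.
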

\begin{proof}[Proof]
	 \cite[Theorem 3]{Yang2024Joint} provides the achievability part of this theorem, and the converse part is given in Theorem \ref{theorem_Gaussian_approximation} in this paper.
\end{proof}

Theorem \ref{theorem_dispersion} provides a closed-form second-order approximation of the optimum finite blocklength coding rate $\log M^\star(k, d_s, d_x, \epsilon) / k$, i.e., for $(d_s, d_x) \in \textsf{int}(\mathcal{D}_{sx})$,
\begin{align}
\label{second_order_cong}
\frac{\log M^\star(k, d_s, d_x, \epsilon)}{k} \cong R_{S,X}(d_s,d_x)+ \sqrt{\frac{\tilde{\mathcal{V}}(d_s,d_x)}{k}} Q^{-1}(\epsilon),
\end{align}
where the notation $\cong$ denotes that the equality holds up to a term of $O\left(\log k/k\right)$. In the next section, we illustrate the approximation accuracy of \eqref{second_order_cong} in the case of erased fair coin flips.

\section{Case Study: Erased Fair Coin Flips}
\label{SEC_5}

In the erased fair coin flips (EFCF) case, we examine two correlated sources, $S$ and $X$. Source $S$ is a binary equiprobable source taking values in $\{0,1\}$, and source $X$, taking values in $\{0,1,e\}$, corresponds to the source obtained by observing source $S$ through a binary erasure channel with an erasure rate of $\delta$. We consider Hamming distortion measures, i.e., $\textsf{d}_s(s^k,z^k) = \frac{1}{k}\sum_{i=1}^k \textrm{1}\{s_i \neq z_i\}$ and $\textsf{d}_x(x^k,y^k) = \frac{1}{k}\sum_{i=1}^k \textrm{1}\{x_i \neq y_i\}$.

\subsection{Rate-Distortion Function}

In this section, we abbreviate $R_{S,X}(d_s,d_x)$ as $R(d_s,d_x)$.
For $0 < \delta \leq 1/3$, $d_s \geq \delta/2$ and $d_x \geq 0$, the rate-distortion function $R(d_s,d_x)$ is given by the following theorem.
\begin{theorem} 
	\label{theorem_Rate_Distortion_func}
	Let $0 < \delta < 1/3$, $d_s \geq \delta / 2$, and $d_x \geq 0$. Define sets
	\begin{itemize}
		\item [(i)] $\mathcal{D}_1 \triangleq \{(d_s,d_x): 0 \leq d_x \leq 2 \delta,\ d_s \geq d_x/2 + \delta/2 \}$;
		\item [(ii)] $\mathcal{D}_2 \triangleq \{(d_s,d_x): 2 \delta \leq d_x \leq 1/2 + \delta/2,\ d_s \geq d_x - \delta/2 \}$;
		\item [(iii)] $\mathcal{D}_3 \triangleq \{(d_s,d_x): d_x \geq d_s + \delta/2,\  \delta/2 \leq d_s \leq 1/2 \}$;
		\item [(iv)] $\mathcal{D}_4 \triangleq \{(d_s,d_x): 2d_s - \delta \leq d_x \leq d_s + \delta/2,\  \delta/2 \leq d_s \leq 3\delta/2 \}$;
		\item [(v)] $\mathcal{D}_5 \triangleq \{(d_s,d_x): d_x \geq 1/2 + \delta/2, \ d_s \geq 1 / 2 \}$.
	\end{itemize}
	Then,
	\begin{itemize}
		\item [(i)] if $(d_s, d_x) \in \mathcal{D}_1$, 
		\begin{equation}
			R(d_s,d_x) = h(\delta) + (1 - \delta) \log 2 - h(d_x) - d_x \log 2,
		\end{equation}
		which is achieved by $X-Y^{\star}-Z^{\star}$ with
		\begin{equation}
			P_{Y^\star}(0) \!=\! P_{Y^\star}(1) \!=\! \frac{1 \!-\! \delta \!-\! d_x}{2 \!-\! 3 d_x}, P_{Y^\star}(e) \!=\! \frac{2\delta \!-\! d_x}{2 \!-\! 3 d_x},
		\end{equation}
		\begin{equation}
			P_{X|Y^{\star}}(x|y)=\left\{
			\begin{array}{rcl}
			1 - d_x & & {x=y}\\
			d_x/2 & & {\mbox{otherwise}}
			\end{array} \right.,
		\end{equation}
		and
		\begin{equation}
			P_{Z^{\star}|Y^{\star}}(z|y)=\left\{
			\begin{array}{rcl}
			1 & & {z=y}\\
			1/2 & & {y = e}\\
			0 & & {\mbox{otherwise}}
			\end{array} \right.;
		\end{equation}
		
		\item [(ii)] if $(d_s, d_x) \in \mathcal{D}_2$, 
		\begin{equation}
			R(d_s,d_x) = (1 - \delta) \left[\log 2 - h((d_x - \delta)/(1 - \delta))\right],
		\end{equation}
		which is achieved by $X-Y^{\star}-Z^{\star}$ with
		\begin{equation}
			P_{Y^\star}(0) = P_{Y^\star}(1) = \frac{1}{2},\ P_{Y^\star}(e) = 0,
		\end{equation}
		\begin{equation}
		P_{X|Y^{\star}}(x|y)=\left\{
		\begin{array}{rcl}
		1 - d_x & & {x=y \neq e}\\
		d_x - \delta & & {x\neq y,\ x, y \neq e}\\
		\delta & & {x= e,\ y \neq e}
		\end{array} \right.,
		\end{equation}
		and
		\begin{equation}
		P_{Z^{\star}|Y^{\star}}(z|y)=\left\{
		\begin{array}{rcl}
		1 & & {z=y \neq e}\\
		0 & & {z \neq y,\ y \neq e}
		\end{array} \right.;
		\end{equation}

		\item [(iii)] if $(d_s, d_x) \in \mathcal{D}_3$,
		\begin{equation}
			R(d_s,d_x) \!=\! (1 \!-\! \delta) \left[\log 2 \!-\! h((d_s \!-\! \delta/2)/(1 \!-\! \delta))\right],
		\end{equation}
		which is achieved by $X - Z^{\star} - Y^{\star}$ with
		\begin{equation}
			P_{Z^\star}(0) = P_{Z^\star}(1) = \frac{1}{2},
		\end{equation}
		\begin{equation}
			P_{X|Z^{\star}}(x|z)\!=\!\left\{
			\begin{array}{rcl}\!\!
			1 \!-\! d_s \!-\! \delta/2 & & {x=z}\\
			d_s \!-\! \delta/2 & & {x\neq z,\ x \neq e}\\
			\delta & & {x= e}
			\end{array} \right.,
			\end{equation}
			and
			\begin{equation}
			P_{Y^{\star}|Z^{\star}}(y|z)=\left\{
			\begin{array}{rcl}
			1 & & {y=z}\\
			0 & & {y \neq z}
			\end{array} \right.;
		\end{equation}
		
		\item [(iv)] if $(d_s, d_x) \in \mathcal{D}_4$, 
		\begin{equation}
		\label{R_D_for_D4}
		\begin{aligned}
		R(d_s,d_x) \!= &h(\delta) + (1 - \delta) \log 2\\ 
		&- H\!\left(d_s \!-\! \delta/2,\, d_x \!-\! d_s \!+\! \delta/2,\, 1 \!-\! d_x\right),
		\end{aligned}
		\end{equation}
		which is achieved by
		\begin{equation}
			P_{Z^{\star}Y^{\star}|X}(z,y|x) = P_{Z^{\star}Y^{\star}}(z, y)g(x)c(z,y,x),
		\end{equation}
		where
		\begin{equation}
		\label{D4_optimal_edge}
		P_{Z^{\star}Y^{\star}}(z, y) = \left\{
		\begin{array}{rcl}
		\frac{\delta + d_x -1}{\delta + 4 d_x - 2 d_s - 2} & & {z=y}\\
		\frac{d_x - d_s - \delta/2}{\delta + 4d_x - 2 d_s -2} & & {y=e}\\
		0 & & {\mbox{otherwise}}
		\end{array} \right.,
		\end{equation}
		\begin{equation}
		g(x) = \left\{
		\begin{array}{rcl}
		\frac{2(1 - d_x)}{1 - \delta} & & {x=0,1}\\
		\frac{1- d_x}{\delta} & & {x=e}
		\end{array} \right.,
		\end{equation}
		\begin{equation}
		c(z,y,x) = \left\{
		\begin{array}{rcl}
		1 & & {(z,y,x)\in \mathcal{G}_1}\\
		\frac{d_s - \delta/2}{1 - d_x} & & {(z,y,x)\in\mathcal{G}_2}\\
		\frac{\delta/2 + d_x - d_s}{1 - d_x} & & {(z,y,x)\in \mathcal{G}_3}\\
		0 & & {\mbox{otherwise}}
		\end{array} \right.,
		\end{equation}
		and sets $\mathcal{G}_1 = \{(0,0,0),(0,e,e),(1,1,1), (1,e,e)\}$, $\mathcal{G}_2 = \{(0,0,1),(0,e,1),(1,1,0), (1,e,0)\}$, $\mathcal{G}_3 = \{(0,0,e),(0,e,0),(1,1,e), (1,e,1)\}$.
		
		\item [(v)] if $(d_s, d_x) \in \mathcal{D}_5$, $R(d_s,d_x) = 0$,
	\end{itemize}
	where $x,y \in \{0,1,e\}$, $z \in \{0,1\}$, $h(\cdot)$ is the binary entropy function, $H(\cdot,\cdot,\cdot)$ is the discrete entropy function, and $0\log(0)$ is taken to be $0$.
	
\end{theorem}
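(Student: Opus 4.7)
The plan is to prove Theorem \ref{theorem_Rate_Distortion_func} region by region, combining a direct achievability computation for each proposed test channel with a tailored converse argument. A short preparatory step specializes the surrogate single-letter distortion to EFCF: $\bar{\textsf{d}}_s(0,z)=\mathbf{1}\{z\neq 0\}$, $\bar{\textsf{d}}_s(1,z)=\mathbf{1}\{z\neq 1\}$, and $\bar{\textsf{d}}_s(e,z)=1/2$ for $z\in\{0,1\}$, while $P_X(0)=P_X(1)=(1-\delta)/2$ and $P_X(e)=\delta$. Every distortion expectation and information quantity below reduces to arithmetic in these quantities.

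\textbf{Achievability.} For each region I would check that the proposed $P_{Z^\star Y^\star|X}$ is a valid stochastic kernel reproducing the stated marginal $P_{Y^\star}$ or $P_{Z^\star}$, verify that $\mathbb{E}[\bar{\textsf{d}}_s(X,Z^\star)]\leq d_s$ and $\mathbb{E}[\textsf{d}_x(X,Y^\star)]\leq d_x$, and then evaluate $I(X;Z^\star,Y^\star)$. In $\mathcal{D}_1$, $\mathcal{D}_2$, and $\mathcal{D}_3$ the triple $(X,Z^\star,Y^\star)$ forms a Markov chain ($X$--$Y^\star$--$Z^\star$ in $\mathcal{D}_1,\mathcal{D}_2$ and $X$--$Z^\star$--$Y^\star$ in $\mathcal{D}_3$), so $I(X;Z^\star,Y^\star)$ collapses to a familiar binary or ternary Hamming rate-distortion expression. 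Region $\mathcal{D}_5$ is handled by constant reconstructions $Z^\star=Y^\star=0$, whose expected distortions $1/2$ and $1/2+\delta/2$ already satisfy the constraints. Region $\mathcal{D}_4$ is the only nontrivial evaluation: the pair $(Z^\star,Y^\star)$ is supported on the four points in \eqref{D4_optimal_edge}, and writing $I(X;Z^\star,Y^\star)=H(Z^\star,Y^\star)-H(Z^\star,Y^\star\mid X)$ with the explicit $c(z,y,x)$ values produces the three-way entropy formula \eqref{R_D_for_D4}.

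\textbf{Converse.} In $\mathcal{D}_1$ I would drop the $d_s$ constraint to obtain $R(d_s,d_x)\geq R(\infty,d_x)$, which is the ordinary Hamming rate-distortion function of the ternary source $X$; the stated formula is the Shannon lower bound for a ternary source under Hamming distortion. In $\mathcal{D}_3$ the symmetric relaxation $R(d_s,d_x)\geq R(d_s,\infty)$ reduces to the indirect Hamming rate-distortion for recovering $S$ from $X$, and the closed form $(1-\delta)[\log 2-h((d_s-\delta/2)/(1-\delta))]$ follows because $\bar{\textsf{d}}_s(e,\cdot)$ is a $z$-independent constant, effectively decoupling the erasure mass from the binary description problem. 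Region $\mathcal{D}_5$ is immediate. For $\mathcal{D}_2$ and, crucially, $\mathcal{D}_4$ I would invoke Property \ref{tilted_information_property}: it suffices to exhibit nonnegative multipliers $(\lambda_s^\star,\lambda_x^\star)$ and a test distribution $P_{Z^\star Y^\star}$ such that the exponential inequality \eqref{property4} holds for every $(z,y)$, with equality on $\textsf{supp}(P_{Z^\star Y^\star})$. Substituting the candidate channels and solving the resulting algebraic system fixes $(\lambda_s^\star,\lambda_x^\star)$, after which \eqref{property3} gives $R(d_s,d_x)=\mathbb{E}[\jmath_X(X,d_s,d_x)]$ and reproduces the claimed rate.

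\textbf{Main obstacle.} The hardest step will be $\mathcal{D}_4$. Unlike the other regions, $(Z^\star,Y^\star)$ is not Markov in $X$, its support is the nonobvious 4-pair set in \eqref{D4_optimal_edge}, and both distortion constraints are simultaneously active. Finding multipliers $(\lambda_s^\star,\lambda_x^\star)$ that make \eqref{property4} hold as equality at all four support points and strictly below $1$ at every off-support $(z,y)\in\{0,1\}\times\{0,1,e\}$ requires careful algebraic matching and a complete complementary-slackness check. Once these multipliers are identified, taking the expectation of $\jmath_X$ reproduces \eqref{R_D_for_D4} directly. A final bookkeeping step verifies that $\mathcal{D}_1$--$\mathcal{D}_5$ tile the admissible set $\mathcal{D}_{\mathrm{adm}}$ and that the five rate expressions agree on their shared boundaries.
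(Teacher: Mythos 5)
Your proposal is correct and follows essentially the same route as the paper's (sketched) proof: argue region by region, reduce the ``easy'' regions to a single-constraint lossy source coding problem---in $\mathcal{D}_1,\mathcal{D}_3$ by dropping the inactive constraint, which is precisely the paper's chain-rule observation $I(X;Z,Y)\geq I(X;Y)$ together with a Markov achiever---handle $\mathcal{D}_5$ by a constant reconstruction, and solve $\mathcal{D}_4$ via the KKT optimality conditions with the nontrivial support set $\{(z,y):(z,y)\neq(0,1),(1,0)\}$. The only minor divergence is that the paper treats $\mathcal{D}_2$ by the same chain-rule reduction as $\mathcal{D}_1$ (with the $d_s$-constraint inactive and $\lambda_s^\star=0$), which is somewhat lighter than your plan of verifying Property~\ref{tilted_information_property} at all $(z,y)$, but the two are equivalent and both yield the stated formula.
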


\begin{proof}[Proof]
	We only give a sketch of the proof. 
	
	The rate-distortion function and the optimal test channel are obtained by solving problem \eqref{rate_distortion_function}. By the chain rule of mutual information, $I(X;Z,Y) \geq I(X;Y)$, with the equality hold if and only if $X-Y-Z$ form a Markov chain in this order. As a consequence, if there exist $P_{Y^{\star}|X}$ and $P_{Z^{\star}|Y^{\star}}$ such that $P_{Y^{\star}|X} = \arg\min_{P_{Y|X}}I(X;Y)\
	\mathrm{s.t.}\ \mathbb{E}\left[\textsf{d}_x(X,Y)\right] \leq d_x$, $X-Y^{\star}-Z^{\star}$ form a Markov chain, and $\mathbb{E}\left[\bar{\textsf{d}}_s(X,Z^{\star})\right] \leq d_s$, we have the optimal test channel $P_{Z^{\star}Y^{\star}|X} =P_{Z^{\star}|Y^{\star}} P_{Y^{\star}|X}$. This corresponds to the case when $(d_s, d_x) \in \mathcal{D}_1 \cup \mathcal{D}_2$. Similarly, $(d_s, d_x) \in \mathcal{D}_3$ implies the case when $X-Z^{\star}-Y^{\star}$ form a Markov chain. Given that $\mathbb{P}[Z^{\star}=0, Y^{\star}=0] = 1$ leads to the rate-distortion function when $(d_s, d_x) \in \mathcal{D}_5$, our focus lies solely on the case when $(d_s, d_x) \in \mathcal{D}_4$.
	
	The optimal test channel for $(d_s, d_x) \in \mathcal{D}_4$ is obtained by solving the Karush-Kuhn-Tucker (KKT) conditions of problem \eqref{rate_distortion_function}. Specifically, the Lagrangian associated with problem \eqref{rate_distortion_function} is written as
	\begin{align}
		&\mathcal{L}\left(\{P_{ZY|X}(z,y|x): (z,y,x) \in \mathcal{Q}\} , s_1, s_2, \lambda(x), \mu(x,z,y)\right) \nonumber \\ 
		=&\sum_{(z,y,x) \in \mathcal{Q}}P_X P_{ZY|X} \log \frac{P_{ZY|X}}{\sum_{x' \in \mathcal{X}}P_X P_{ZY|X}}\nonumber\\ 
		&+ s_1 \bigg(\sum_{(z,y,x) \in \mathcal{Q}}P_X P_{ZY|X}\bar{\textsf{d}}_s(x,z) - d_s \bigg) \nonumber \\
		&+ s_2 \bigg(\sum_{(z,y,x) \in \mathcal{Q}}P_X P_{ZY|X}\textsf{d}_x(x,y) - d_x \bigg)\nonumber\\ 
		&- \sum_{(z,y,x) \in \mathcal{Q}} \mu(x,z,y) P_{ZY|X}\nonumber\\ 
		&+ \sum_{x \in \mathcal{X}} \lambda(x)\bigg(\sum_{z \in \widehat{\mathcal{M}}, y \in \widehat{\mathcal{X}}}P_{ZY|X} - 1\bigg),
	\end{align}
	where $\mathcal{Q} \triangleq \{(z,y,x): P_{ZY|X} > 0 \}$ and the Lagrange multipliers $s_1 \geq 0$, $s_2 \geq 0$, $\mu(x,z,y) \geq 0$. Clearly, varying selections of $\mathcal{Q}$ lead to different Lagrangians and, consequently, different KKT conditions. The optimal test channel for $(d_s, d_x) \in \mathcal{D}_4$ is obtained by solving the KKT conditions correspond to $\mathcal{Q} = \{(z,y,x): (z,y) \neq (0, 1),  (z,y) \neq (1, 0)\}$.

\end{proof}

\begin{figure}[htbp]
	\centering
	\includegraphics[width=1\linewidth]{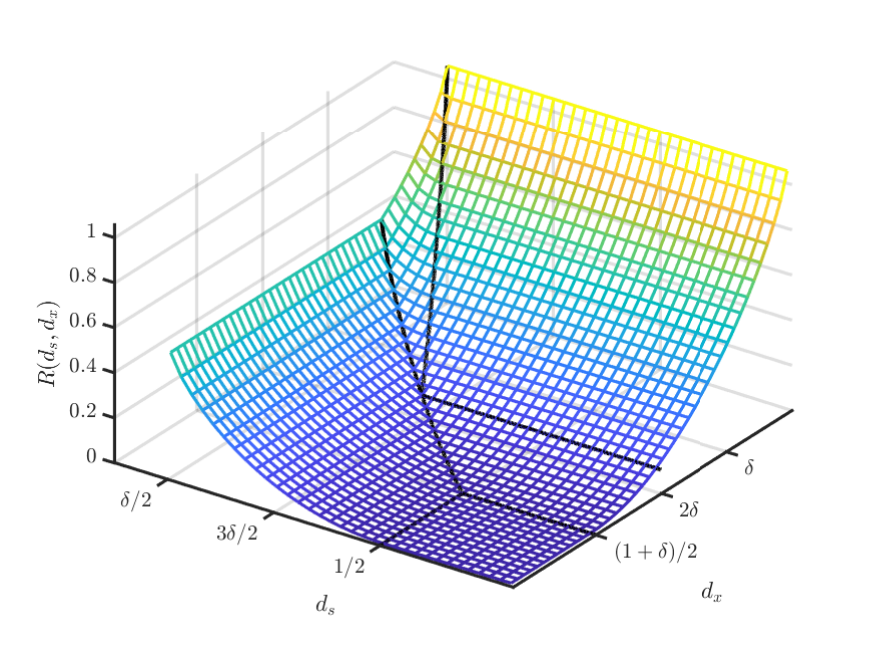}
	\caption{Rate-distortion function of the EFCF case with $\delta = 0.2$.}
	\label{R_D}
\end{figure}

The rate-distortion function with $\delta=0.2$ is plotted in Fig. \ref{R_D}. For $(d_s, d_x) \in \mathcal{D}_4$, some useful quantities are computed as follows:
\begin{equation}
\lambda_s^{\star} = \log\left(\frac{d_x - d_s + \delta / 2}{d_s - \delta/2}\right),
\end{equation}
\begin{equation}
\lambda_x^{\star} = \log\left(\frac{1 - d_x}{d_x - d_s + \delta / 2}\right),
\end{equation}

\begin{align}
	\jmath_{X}(0,d_s,d_x) =&\jmath_{X}(1,d_s,d_x) \nonumber \\ 
	=& -\lambda_s^{\star}d_s - \lambda_x^{\star} d_x - \log\! \left(\!\frac{1 \!-\! \delta}{2 (1 \!-\! d_x)}\!\right)
\end{align}
\begin{equation}
	\jmath_{X}(e,d_s,d_x) =-\lambda_s^{\star}d_s - \lambda_x^{\star} d_x - \log\!\left(\!\sqrt{\frac{d_s \!-\! \delta/2}{d_x \!-\! d_s \!+\! \delta/2}}\!\cdot\! \frac{\delta}{1 \!-\! d_x}\!\right)
\end{equation}
\begin{equation}
	\mathcal{V}(d_s,d_x) = \delta  (1 - \delta) \log^2 \left(\sqrt{\frac{d_s \!-\! \delta/2}{d_x \!-\! d_s \!+\! \delta/2}} \cdot \frac{2\delta}{1\!-\!\delta} \right),
\end{equation}
\begin{equation}
	\tilde{\mathcal{V}}(d_s,d_x) = \mathcal{V}(d_s,d_x) + \frac{\delta}{4} \left(\lambda_s^{\star}\right)^2.
\end{equation}

\subsection{Nonasymptotic Converse Bound}

\begin{theorem} 
	\label{case_study_converse}
	(Converse, EFCF): In erased fair coin flips, any $(k, M, d_s, d_x, \epsilon)$ code must satisfy
	\begin{equation}
	\label{case_study_converse_bound}
	\begin{aligned}
		\epsilon \geq& \sup_{\gamma \geq 0} \bigg\{\mathbb{P}\bigg[\sum_{i=1}^k \jmath_{X}(X_i,d_s,d_x) + \lambda_s^{\star} \big(\textsf{d}_s(S_i, 0)\\ 
		&- \bar{\textsf{d}}_s(X_i, 0)\big) \geq \gamma + \log M\bigg] - \exp(-\gamma)\bigg\},
	\end{aligned}
	\end{equation}
	where 
	\begin{equation}
	\label{EFCF_distortion_measures}
	\textsf{d}_s(s, z)\!=\!\textrm{1}\{s \!\neq\! z\},
	\
	\bar{\textsf{d}}_s(x, z)\!=\!\left\{\!\!
	\begin{array}{rcl}
	\textrm{1}\{x \!\neq\! z\} & & {x \neq e}\\
	1/2 & & {x=e}
	\end{array} \right.\!\!,
	\end{equation}
	and $\{(S_i, X_i)\}_{i=1}^k$ are independently drawn from the EFCF source.
\end{theorem}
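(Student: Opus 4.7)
The plan is to establish $\mathbb{P}\bigl[G \geq \gamma + \log M\bigr] \leq \epsilon + \exp(-\gamma)$ for every $\gamma \geq 0$, where $G := \sum_{i=1}^k \bigl[\jmath_{X}(X_i, d_s, d_x) + \lambda_s^{\star}(\textsf{d}_s(S_i, 0) - \bar{\textsf{d}}_s(X_i, 0))\bigr]$, and then to take the supremum over $\gamma$. The first step will exploit the EFCF conditional symmetry to swap the deterministic $z = 0$ in $G$ for the code-produced $Z_i$. Conditional on $X_i$: when $X_i \neq e$ the equality $S_i = X_i$ forces $\textsf{d}_s(S_i, z) - \bar{\textsf{d}}_s(X_i, z) = 0$ for every $z \in \{0, 1\}$; when $X_i = e$, $S_i$ is uniform on $\{0, 1\}$ independent of $z$, so $\textsf{d}_s(S_i, z) - \bar{\textsf{d}}_s(X_i, z) = \textrm{1}\{S_i \neq z\} - 1/2$ is uniform on $\{-1/2, +1/2\}$. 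Combined with the Markov chain $S^k - X^k - U - (Z^k, Y^k)$, which makes $Z^k$ conditionally independent of $S^k$ given $X^k$, these facts show that $G$ has the same distribution as $G' := \sum_{i=1}^k \bigl[\jmath_X(X_i, d_s, d_x) + \lambda_s^{\star}(\textsf{d}_s(S_i, Z_i) - \bar{\textsf{d}}_s(X_i, Z_i))\bigr]$, so it suffices to bound $\mathbb{P}[G' \geq \gamma + \log M]$.

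The second step will decompose $\mathbb{P}[G' \geq \gamma + \log M]$ according to the excess-distortion event $\mathcal{F} := \{\textsf{d}_s(S^k, Z^k) > d_s\} \cup \{\textsf{d}_x(X^k, Y^k) > d_x\}$, for which $\mathbb{P}[\mathcal{F}] \leq \epsilon$ holds by definition of the code. On $\mathcal{F}^c$ both $\lambda_s^{\star}(\sum_{i=1}^k \textsf{d}_s(S_i, Z_i) - k d_s)$ and $\lambda_x^{\star}(\sum_{i=1}^k \textsf{d}_x(X_i, Y_i) - k d_x)$ are non-positive; adding their negatives to the left-hand side of $G' \geq \gamma + \log M$ and cancelling the $\lambda_s^{\star}\textsf{d}_s(S_i, Z_i)$ contributions yields
$$\{G' \geq \gamma + \log M\} \cap \mathcal{F}^c \;\subseteq\; \Bigl\{\sum_{i=1}^k \psi(X_i, Z_i, Y_i) \geq \gamma + \log M\Bigr\},$$
where $\psi(x, z, y) := \jmath_{X}(x, d_s, d_x) + \lambda_s^{\star} d_s + \lambda_x^{\star} d_x - \lambda_s^{\star}\bar{\textsf{d}}_s(x, z) - \lambda_x^{\star}\textsf{d}_x(x, y)$.

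The third step will bound the right-hand probability via Markov's inequality: $\mathbb{P}[\sum_{i=1}^k \psi(X_i, Z_i, Y_i) \geq \gamma + \log M] \leq M^{-1}\exp(-\gamma)\mathbb{E}\bigl[\exp\bigl(\sum_{i=1}^k \psi(X_i, Z_i, Y_i)\bigr)\bigr]$. I will expand this expectation using the factorization $P_{X^k U Z^k Y^k} = P_{X^k} P_{U|X^k} P_{Z^k Y^k | U}$ (noting that $\psi$ does not involve $S^k$), bound $P_{U|X^k}(u|x^k) \leq 1$, factor $P_{X^k} = \prod_i P_X$, and apply equation \eqref{property4} of Property \ref{tilted_information_property} factor-wise, which yields $\mathbb{E}_X[\exp(\psi(X, z, y))] \leq 1$ for every $(z, y)$. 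The resulting bound $\mathbb{E}[\exp(\sum \psi)] \leq M$ produces $\mathbb{P}[\sum \psi \geq \gamma + \log M] \leq \exp(-\gamma)$, and combining the three steps delivers $\mathbb{P}[G \geq \gamma + \log M] \leq \epsilon + \exp(-\gamma)$.

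The main obstacle I anticipate is the distributional identity $G \stackrel{d}{=} G'$ in the first step, which requires the simultaneous use of the EFCF form of $P_{S|X}$, the binary semantic reconstruction alphabet $\hat{\mathcal{S}} = \{0, 1\}$, and the Markov chain induced by any code. Without it, the code-dependent $Z_i$ cannot be eliminated and the clean operational form stated in Theorem \ref{case_study_converse} cannot be achieved.
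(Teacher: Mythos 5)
Your proof is correct, and the underlying mechanism is the same as the paper's, but you reach it along a cleaner, more self-contained route. The paper instantiates its general Corollary~\ref{converse_bound_corollary} with the choice $P_{\bar{X}^k|\bar{Z}^k,\bar{Y}^k}=\prod_i P_{X|Z^\star=z_i,Y^\star=y_i}$, $\lambda_s=k\lambda_s^\star$, $\lambda_x=k\lambda_x^\star$, rewrites the resulting event via the identity \eqref{relation_of_tilted_information}, and finally observes that the conditional probability inside the $\min_{z^k}$ is invariant in $z^k$ by the EFCF structure, so the minimizer may be taken as $z^k=0^k$. Your argument inlines all of this: your Step~2 (excess-distortion decomposition) and Step~3 (Markov plus $P_{U|X^k}\le 1$, single-letter factorization, and \eqref{property4}) are exactly the proof of Theorem~\ref{general_converse_theorem} specialized to the product choice, and your $\psi$ is precisely $\jmath_X - f_{\bar{X}|\bar{Z}\bar{Y}} + \log M$ under that instantiation. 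The one genuine variation is Step~1: you express the EFCF symmetry as a distributional identity $G\stackrel{d}{=}G'$ (using the Markov chain $S^k - X^k - (Z^k,Y^k)$ to replace the code-produced $Z^k$ with the fixed $0^k$), whereas the paper phrases it as constancy of the conditional probability over $z^k$, which trivializes the $\min$. Both are the same observation; yours has the minor advantage of working directly with $\psi(x,z,y)$, which is finite for every $(z,y)$, and so never touches the Radon--Nikodym technicality that Corollary~\ref{converse_bound_corollary} must navigate for $(z,y)$ outside $\mathrm{supp}(P_{Z^\star Y^\star})$ (indeed, $\psi$ is exactly what \eqref{property4} bounds, with equality only on the support). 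What the paper's route buys is reuse of the already-established general machinery; what yours buys is a short, fully elementary derivation whose only outside ingredient is Property~\ref{tilted_information_property}.
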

\begin{proof}[Proof]
We weaken the bound in \eqref{converse_bound2} by choosing
\begin{align}
P_{\bar{X}^k|\bar{Z}^k=z^k,\bar{Y}^k=y^k}(x^k)=&\prod_{i=1}^k P_{X|Z^{\star}=z_i, Y^{\star}=y_i} (x_i), \\
\lambda_s=& k \lambda_s^{\star},\\
\lambda_x=& k \lambda_x^{\star}.
\end{align}
By Corollary \ref{converse_bound_corollary}, any $(k, M, d_s, d_x, \epsilon)$ code must satisfy
\begin{align} 
\epsilon \geq& \sup_{\gamma \geq 0} \bigg\{\mathbb{E}\bigg[\min_{z^k \in \hat{\mathcal{S}}^k, y^k \in \hat{\mathcal{A}}^k}
\mathbb{P}\bigg[\sum_{i=1}^k\imath_{X;Z^\star Y^\star}(X_i;z_i,y_i)\nonumber\\
&+ \lambda_s^{\star}(\textsf{d}_s(S_i,z_i) - d_s) +  \lambda_x^{\star} (\textsf{d}_x(X_i,y_i) - d_x)\nonumber\\
&\geq \log M + \gamma| X^k \bigg]\bigg] - \exp(-\gamma)\bigg\}\\
=& \sup_{\gamma \geq 0} \bigg\{\mathbb{E}\bigg[\min_{z^k \in \hat{\mathcal{S}}^k}
\mathbb{P}\bigg[\sum_{i=1}^k\jmath_{X}(X_i,d_s,d_x) + \lambda_s^\star \big( \textsf{d}_s(S_i,z_i)\nonumber \\ 
&- \bar{\textsf{d}}_s(X_i,z_i) \big)  
\geq \log M + \gamma| X^k \bigg]\bigg] - \exp(-\gamma)\bigg\}, \label{EFCF_converse_temp1}
\end{align}
where \eqref{EFCF_converse_temp1} is by \eqref{relation_of_tilted_information}. By \eqref{EFCF_distortion_measures}, the probability in \eqref{EFCF_converse_temp1} is the same for all $z^k \in \hat{\mathcal{S}}^k$, leading to \eqref{case_study_converse_bound}.
\end{proof}

\subsection{Nonasymptotic Achievability Bounds}

We first give the nonasymptotic achievability bound for $(d_s, d_x) \in \mathcal{D}_1 \cup \mathcal{D}_4$.
\begin{theorem} 
	\label{case_study_achievability}
	(Achievability, EFCF with $(d_s, d_x) \in \mathcal{D}_1 \cup \mathcal{D}_4$): In erased fair coin flips, for $(d_s, d_x) \in \mathcal{D}_1 \cup \mathcal{D}_4$, there exists an $(k, M, d_s, d_x, \epsilon)$ code such that 
	\begin{align}
	\label{case_study_achievability_bound}
		\epsilon \leq &\sum_{t=0}^k \mathsf{binopmf}(t; k, \delta) \cdot \sum_{i=0}^t \mathsf{binopmf}(i; t, 1/2)\nonumber\\ 
		&\cdot \bigg(1 - \sum_{j=0}^t \mathsf{binopmf}(j; t, \mathbb{P}[Y^{\star} \neq e]) \nonumber \\ 
		&\cdot \sum_{r=0}^{k-t} \mathsf{binopmf}(r; k - t, \mathbb{P}[Y^{\star} = e])\nonumber\\ 
		&\cdot \sum_{v=0}^{\lfloor k d_x \rfloor - j} \mathsf{binopmf}(v - r; k - t - r, 1/2) \nonumber \\ 
		&\cdot \mathsf{binocdf}(\lfloor k d_s \rfloor - i -(v-r); r, 1/2) \bigg)^M,
	\end{align} 
	where $\mathsf{binopmf}(\cdot; n, p)$ and $\mathsf{binocdf}(\cdot; n, p)$ denote the probability mass function (PMF) and the cumulative distribution function (CDF) of the binomial distribution, respectively, with $n$ degrees of freedom and success probability $p$.
	
\end{theorem}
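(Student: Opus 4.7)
The plan is to apply a random coding argument tailored to the optimal test channel from Theorem~\ref{theorem_Rate_Distortion_func}. The key structural observation is that in both $\mathcal{D}_1$ and $\mathcal{D}_4$, the optimal joint distribution $P_{Z^\star Y^\star}$ takes the same form: with $p = \mathbb{P}[Y^\star \neq e]$ and $q = 1 - p$, one has $\mathbb{P}[Y^\star = 0] = \mathbb{P}[Y^\star = 1] = p/2$, $\mathbb{P}[Y^\star = e] = q$, and conditionally $Z^\star = Y^\star$ when $Y^\star \in \{0,1\}$, while $Z^\star \sim \mathrm{Unif}\{0,1\}$ independently when $Y^\star = e$. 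I would draw $M$ codewords $\{(Z_m^k, Y_m^k)\}_{m=1}^M$ i.i.d.\ from $P_{Z^\star Y^\star}^{k}$; the encoder outputs the index of any codeword satisfying both distortion constraints and declares a decoding error if none exists.

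The next step is to evaluate the single-codeword success probability. A per-position enumeration of $(Y^\star, Z^\star)$ against each source value shows that the joint law of the indicator pair $(\mathbf{1}\{X \neq Y^\star\}, \mathbf{1}\{S \neq Z^\star\})$ depends only on whether the position is erased, not on the specific value of $s$ (or $x$ at non-erased positions). Hence the success probability depends on $(s^k, x^k)$ only through the erasure count $t$. Decomposing the $k$ positions yields five counts: $j$ (erased, $Y^\star \neq e$), $r$ (non-erased, $Y^\star = e$), $V$ (non-erased, $Y^\star \neq e$ and $Y^\star \neq X$), $A$ (erased, $Z^\star \neq S$), and $B$ (non-erased with $Y^\star = e$, $Z^\star \neq S$). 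Using the above symmetry and the fact that these counts involve disjoint position subsets, one obtains $j \sim \mathrm{Bin}(t, p)$, $r \sim \mathrm{Bin}(k-t, q)$, and conditional on $(t, j, r)$, the triple $(A, V, B)$ has independent marginals $\mathrm{Bin}(t, 1/2)$, $\mathrm{Bin}(k-t-r, 1/2)$, $\mathrm{Bin}(r, 1/2)$. Moreover, on non-erased positions with $Y^\star \neq e$ we have $Z^\star = Y^\star$ and $S = X$, so $\mathbf{1}\{X \neq Y^\star\} = \mathbf{1}\{S \neq Z^\star\}$ and the total $X$-Hamming and $S$-Hamming counts reduce to $j + r + V$ and $A + B + V$, respectively. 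Summing out $(j, r, V, B)$ for each fixed $A = i$ yields $P_1(t) = \sum_{i=0}^{t} \mathsf{binopmf}(i; t, 1/2) \, Q(i)$, where $Q(i)$ is exactly the inner multi-sum appearing in \eqref{case_study_achievability_bound}.

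Finally, the standard random coding bound gives $\epsilon \leq \mathbb{E}_T[(1 - P_1(T))^M]$ with $T \sim \mathrm{Bin}(k, \delta)$. To reach the stated form I would apply Jensen's inequality to the convex function $x \mapsto (1 - x)^M$, which yields $(1 - \mathbb{E}_{A \mid T}[Q(A)])^M \leq \mathbb{E}_{A \mid T}[(1 - Q(A))^M]$; averaging over $T$ then produces \eqref{case_study_achievability_bound}. The main technical obstacle is the per-position indicator symmetry underlying the whole decomposition: it requires explicit case enumeration of $(Y^\star, Z^\star)$ against each $s$ and $x$ value, and verification that swapping $s \leftrightarrow 1 - s$ at erased positions (or $x \leftrightarrow 1 - x$ at non-erased positions) leaves the joint indicator law invariant. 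Once this symmetry is secured, the independent-binomials factorization, the $Q(i)$ identification, and the Jensen weakening are routine.
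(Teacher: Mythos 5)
Your computation of the single-codeword success probability — the $(j,r,V,A,B)$ decomposition, the per-position symmetry argument, and the independent-binomials factorization — is correct and reproduces the inner multi-sum of \eqref{case_study_achievability_bound}. The gap is in the random-coding step that surrounds it. You define the encoder as ``output the index of any codeword satisfying both distortion constraints,'' but the semantic constraint $\textsf{d}_s(S^k,\tilde{z}_m^k)\leq d_s$ depends on $S^k$, which the encoder does not observe; only $X^k$ is available. This encoder is therefore not implementable, and the ``standard random coding bound'' $\epsilon\leq\mathbb{E}_T\left[(1-P_1(T))^M\right]$ — which is the bound for a noiseless problem in which the encoder can check every constraint — does not apply. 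Indeed, for the implementable encoder the paper actually uses (from \cite[Theorem 1]{Yang2024Joint}: pick $m^\star$ minimizing $\mathbb{P}[\textsf{d}_s(S^k,\tilde{z}_m^k)>d_s \cup \textsf{d}_x(X^k,\tilde{y}_m^k)>d_x\mid X^k]$), the exact ensemble error probability equals the right-hand side of \eqref{case_study_achievability_bound}, and your intermediate bound $\mathbb{E}_T\left[(1-P_1(T))^M\right]$ is strictly smaller by the very Jensen inequality you invoke — so the random-coding argument cannot establish it.

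The underlying issue is the order of conditioning relative to the power $M$. In the paper's proof, the erased-position contribution $I$ to the $S$-distortion count is conditioned on \emph{first}: given $X^k$ with $t$ erasures, $I\sim\mathrm{Bin}(t,1/2)$ independently of the codebook and of the encoder's choice, because $S_i\mid X_i=e$ is uniform on $\{0,1\}$ regardless of the candidate reconstruction $z$. Given $X^k$ and $I=i$, whether a codeword satisfies the residual constraints is \emph{deterministic}, so the error event factors across the $M$ i.i.d.\ codewords, yielding $(1-Q(i))^M$; averaging over $i$ and $t$ gives \eqref{case_study_achievability_bound} directly, with no Jensen step. You instead average over $A$ \emph{inside} the single-codeword success probability, producing $P_1(t)=\mathbb{E}_A[Q(A)]$ and then $(1-P_1(t))^M$, which would be valid only if the encoder could observe $A$ (equivalently the erased-position part of $S^k$) when choosing the codeword, which it cannot. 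Appending Jensen's inequality does not repair the argument: it only shows that \eqref{case_study_achievability_bound} is weaker than your intermediate bound, not that the intermediate bound holds.
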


\begin{figure}[htbp]
	\centering
	\includegraphics[width=1\linewidth]{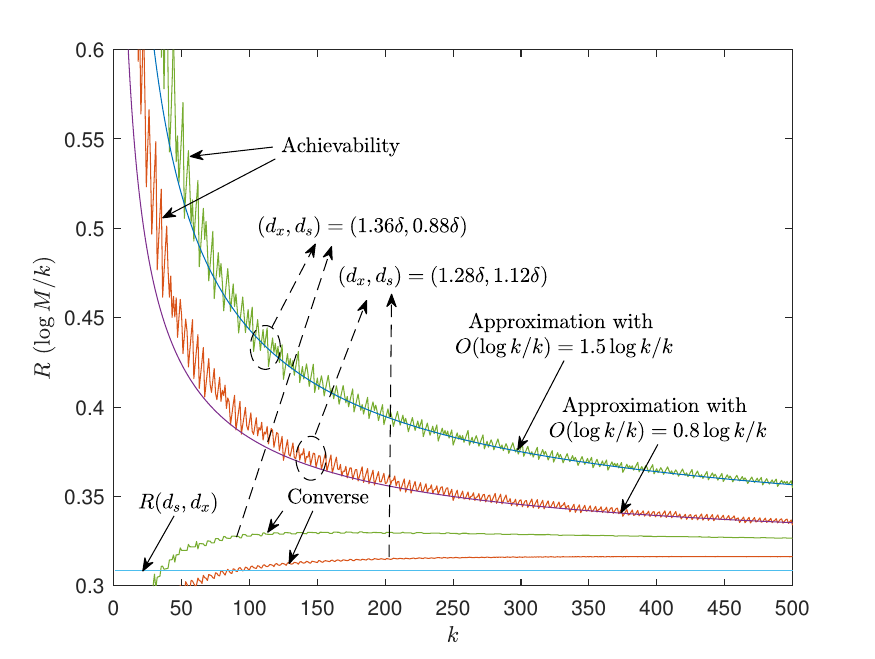}
	\caption{Rate-blocklength trade-off in the erased fair coin flips case with $\delta = 0.2$ and $\epsilon = 0.1$. Note that $(d_x, d_s) = (1.36\delta, 0.88\delta)$ and $(d_x, d_s) = (1.28\delta, 1.12\delta)$ share the same value of $R(d_s,d_x)$.}
	\label{convergence_result}
\end{figure}

\begin{proof}[Proof]

We adopt the achievability scheme in \cite[Theorem 1]{Yang2024Joint}. Specifically, provided $M$ codewords $\{(\tilde{z}_m^k, \tilde{y}_m^k)\}_{m=1}^M$ and having observed $x^k$, the encoder $\textsf{f}$ chooses 
\begin{equation}
\begin{aligned}
m^\star \in \arg\min_{m} \mathbb{P}[&\textsf{d}_s(S^k, \tilde{z}_m^k) > d_s\\ 
&\cup \textsf{d}_x(X^k, \tilde{y}_m^k) > d_x | X^k=x^k],
\end{aligned}
\end{equation}
i.e., $\textsf{f}(x^k) = m^\star$. Then, the decoder outputs $\textsf{c}(\textsf{f}(x^k)) = (\tilde{z}_m^k, \tilde{y}_m^k)$. We separate the decoder into $\textsf{c}_s(\textsf{f}(x^k)) = \tilde{z}_m^k$ and $\textsf{c}_x(\textsf{f}(x^k)) = \tilde{y}_m^k$ for convenience.

Consider the ensemble of codes with codewords
drawn i.i.d. from the optimal edge distribution $P_{Z^\star Y^\star}$ for $(d_s, d_x) \in \mathcal{D}_1 \cup \mathcal{D}_4$. We discuss cases $(d_s, d_x) \in \mathcal{D}_1$ and $(d_s, d_x) \in \mathcal{D}_4$ together because both cases have $P_{Z^\star Y^\star}$ with the same structure. Specifically, given $(d_s, d_x) \in \mathcal{D}_1 \cup \mathcal{D}_4$, let $P_{\tilde{Z}\tilde{Y}}$ be the corresponding optimal edge distribution $P_{Z^\star Y^\star}$ in Theorem \ref{theorem_Rate_Distortion_func}, and define $M$ mutually independent random codewords $\{(\tilde{Z}_m^k, \tilde{Y}_m^k) \sim P_{\tilde{Z}\tilde{Y}} \times \dots \times P_{\tilde{Z}\tilde{Y}}\}_{m=1}^M$. Let $(\tilde{Z}^k, \tilde{Y}^k) \sim P_{\tilde{Z}\tilde{Y}} \times \dots \times P_{\tilde{Z}\tilde{Y}}$. 
Then, the ensemble error probability can be computed by \eqref{ensemble_error_prob_01}-\eqref{achi_1},
\begin{figure*}[!t]
	\begin{align}
	&\mathbb{P}\left[\textsf{d}_s(S^k, \textsf{C}_s(\textsf{f}(X^k))) > d_s \cup \textsf{d}_x(X^k,\textsf{C}_x(\textsf{f}(X^k))) > d_x\right] \label{ensemble_error_prob_01} \\
	=&\sum_{t=0}^k \mathbb{P}\left[t\ \text{erasures}\ \text{in}\ X^k \right] \cdot \mathbb{P}\big[ \textsf{d}_s(S^k, \textsf{C}_s(\textsf{f}(X^k))) > d_s\cup \textsf{d}_x(X^k,\textsf{C}_x(\textsf{f}(X^k))) > d_x | t\ \text{erasures}\ \text{in}\ X^k \big]\\
	=&\sum_{t=0}^k \mathbb{P}\left[t\ \text{erasures}\ \text{in}\ X^k \right]\cdot \mathbb{P}\big[ k\textsf{d}_s(S^k, \textsf{C}_s(\textsf{f}(X^k))) \!>\! \lfloor k d_s \rfloor\cup k \textsf{d}_x(X^k,\textsf{C}_x(\textsf{f}(X^k))) \!>\! \lfloor k d_x \rfloor | t\ \text{erasures}\ \text{in}\ X^k \big]\\
	=&\sum_{t=0}^k \mathbb{P}\left[t\ \text{erasures}\ \text{in}\ X^k \right] \cdot \sum_{i=0}^t \bigg \{ \mathbb{P}\left[t \textsf{d}_s(S^t,\textsf{C}_s(\textsf{f}(X^t))) = i | X^t = e,\dots, e \right] \nonumber \\
	&\cdot \mathbb{P}\big[ (k-t)\textsf{d}_s(S^{k-t}, \textsf{C}_s(\textsf{f}(X^{k-t}))) \!>\! \lfloor k d_s \rfloor - i \cup k \textsf{d}_x(X^{k},\textsf{C}_x(\textsf{f}(X^{k}))) \!>\! \lfloor k d_x \rfloor | X^{k-t} \!=\! S^{k-t},t\ \text{erasures}\ \text{in}\ X^k  \big] \bigg \} \label{achi_1}
	\end{align}
	\hrulefill
\end{figure*}
where \eqref{achi_1} holds since the distribution of $k\textsf{d}_s(S^t, \textsf{C}_s(\textsf{f}(X^t)))$, given $ X^t = e,\dots, e$, does not dependent on the codebook. 

Next, we compute the probability terms in \eqref{achi_1}. Clearly, since $\mathbb{P}\left[X=e \right] = \delta$,
\begin{align}
	\mathbb{P}\left[t\ \text{erasures}\ \text{in}\ X^k \right] = C_{k}^t \cdot \delta^t \cdot (1 - \delta)^{k-t};
\end{align}
since $P_{S|X}(0|e) = P_{S|X}(1|e) = 1/2$,
\begin{align}
	\mathbb{P}\left[ t\textsf{d}_s(S^t, \textsf{C}_s(\textsf{f}(X^t))) = i| X^t = e,\dots, e \right] = C_{t}^{i} \cdot \frac{1}{2^t}.
\end{align}
For the computation of the last probability term in \eqref{achi_1}, we have \eqref{last_term_01}-\eqref{last_term_04},
\begin{figure*}[!t]
	\begin{align}
	&\mathbb{P}\left[ (k-t)\textsf{d}_s(S^{k-t}, \textsf{C}_s(\textsf{f}(X^{k-t}))) > \lfloor k d_s \rfloor - i \cup k \textsf{d}_x(X^{k},\textsf{C}_x(\textsf{f}(X^{k}))) > \lfloor k d_x \rfloor | X^{k-t} = S^{k-t},\ t\ \text{erasures}\ \text{in}\ X^k  \right] \label{last_term_01} \\
	=&\prod_{m=1}^M \mathbb{P}\left[ (k-t)\textsf{d}_s(S^{k-t}, \tilde{Z}^{k-t}_m) > \lfloor k d_s \rfloor - i \cup k \textsf{d}_x(X^k, \tilde{Y}^k_m) > \lfloor k d_x \rfloor | X^{k-t} = S^{k-t},\ t\ \text{erasures}\ \text{in}\ X^k  \right]\\
	=&\left( \mathbb{P}\left[ (k-t)\textsf{d}_s(S^{k-t}, \tilde{Z}^{k-t}) > \lfloor k d_s \rfloor - i \cup k \textsf{d}_x(X^{k}, \tilde{Y}^{k}) > \lfloor k d_x \rfloor | X^{k-t} = S^{k-t},\ t\ \text{erasures}\ \text{in}\ X^k  \right] \right)^M \\
	=&\left(1 - \mathbb{P}\left[ (k-t)\textsf{d}_s(S^{k-t}, \tilde{Z}^{k-t}) \leq \lfloor k d_s \rfloor - i \cap k \textsf{d}_x(X^k, \tilde{Y}^k) \leq \lfloor k d_x \rfloor | X^{k-t} = S^{k-t},\ t\ \text{erasures}\ \text{in}\ X^k \right] \right)^M \label{last_term_04}
	\end{align}
	\hrulefill
\end{figure*}
where the probability term in \eqref{last_term_04} can be computed by \eqref{prob_term_01}-\eqref{prob_term_04},
\begin{figure*}[!t]
	\begin{align}
	&\mathbb{P}\left[ (k-t)\textsf{d}_s(S^{k-t}, \tilde{Z}^{k-t}) \leq \lfloor k d_s \rfloor - i \cap k \textsf{d}_x(X^k, \tilde{Y}^k) \leq \lfloor k d_x \rfloor | X^{k-t} = S^{k-t},\ t\ \text{erasures}\ \text{in}\ X^k \right] \label{prob_term_01} \\
	=&\sum_{j=0}^t \mathbb{P}\left[ t \textsf{d}_x(X^t, \tilde{Y}^t) = j | X^t = e,\dots, e \right] \nonumber \\ 
	&\cdot \mathbb{P}\left[ (k-t)\textsf{d}_s(S^{k-t}, \tilde{Z}^{k-t}) \leq \lfloor k d_s \rfloor - i \cap (k - t) \textsf{d}_x(X^{k - t}, \tilde{Y}^{k-t}) \leq \lfloor k d_x \rfloor - j | X^{k-t} = S^{k-t}\right]\\
	=&\sum_{j=0}^t \bigg \{ \mathbb{P}\left[ t \textsf{d}_x(X^t, \tilde{Y}^t) = j | X^t = e,\dots, e \right] \cdot \sum_{r = 0}^{k-t} \mathbb{P}\left[ \text{$r$ erasures in $\tilde{Y}^{k-t}$} \right] \nonumber \\
	&\cdot \mathbb{P}\left[ (k-t)\textsf{d}_s(S^{k-t}, \tilde{Z}^{k-t}) \!\leq\! \lfloor k d_s \rfloor - i \cap (k - t) \textsf{d}_x(X^{k - t}, \tilde{Y}^{k-t}) \!\leq\! \lfloor k d_x \rfloor - j | X^{k-t} \!=\! S^{k-t}, \text{$r$ erasures in $\tilde{Y}^{k-t}$} \right] \bigg \}\\
	=&\sum_{j=0}^t \bigg \{ \mathbb{P}\left[ t \textsf{d}_x(X^t, \tilde{Y}^t) = j | X^t = e,\dots, e \right] \cdot \sum_{r = 0}^{k-t} \mathbb{P}\left[ \text{$r$ erasures in $\tilde{Y}^{k-t}$} \right] \nonumber \\
	&\cdot \sum_{v=0}^{\lfloor k d_x \rfloor - j} \mathbb{P}\left[ (k\!-\!t)\textsf{d}_s(S^{k-t}, \tilde{Z}^{k-t}) \!\leq\! \lfloor k d_s \rfloor - i \cap (k - t) \textsf{d}_x(X^{k - t}, \tilde{Y}^{k-t}) =v | X^{k-t} \!=\! S^{k-t}, \text{$r$ erasures in $\tilde{Y}^{k-t}$} \right] \bigg \} \label{prob_term_04}
	\end{align}
	\hrulefill
\end{figure*}
with
\begin{equation}
	\begin{aligned}
	&\mathbb{P}\left[ t \textsf{d}_x(X^t, \tilde{Y}^t) = j | X^t = e,\dots, e \right]\\ 
	&= C_t^j \cdot (\mathbb{P}[\tilde{Y} \neq e])^j \cdot (\mathbb{P}[\tilde{Y} = e])^{t-j},
	\end{aligned}
\end{equation}
\begin{equation}
	\begin{aligned}
	&\mathbb{P}\left[ \text{$r$ erasures in $\tilde{Y}^{k-t}$} \right]\\ 
	&= C_{k-t}^r \cdot (\mathbb{P}[\tilde{Y} = e])^r \cdot (\mathbb{P}[\tilde{Y} \neq e])^{k-t-r},
	\end{aligned}
\end{equation}
and the last probability term in \eqref{prob_term_04} being computed by \eqref{Last_Prob_01}-\eqref{Last_Prob_03}.
\begin{figure*}[!t]
	\begin{align}
	&\mathbb{P}\left[ (k-t)\textsf{d}_s(S^{k-t}, \tilde{Z}^{k-t}) \leq \lfloor k d_s \rfloor - i  \cap (k - t) \textsf{d}_x(X^{k - t}, \tilde{Y}^{k-t}) = v | X^{k-t} = S^{k-t}, \text{$r$ erasures in $\tilde{Y}^{k-t}$} \right] \label{Last_Prob_01} \\
	=& \mathbb{P}\left[ \text{The $(k-t-r)$ non-erasure elements in $\tilde{Y}^{k-t}$ intorduce $(v-r)$ bit errors}  \right] \nonumber \\
	&\cdot \mathbb{P}\Big[ \text{The $r$ elements in $\tilde{Z}^{k-t}$ corresponding to the $r$ erasures in $\tilde{Y}^{k-t}$ intorduce}\nonumber\\ 
	&\text{ less than or equal to $(\lfloor k d_s \rfloor \!-\! i\!-\!(v\!-\!r))$ bit errors}  \Big] \label{language_description} \\
	=&\left\{
	\begin{array}{rcl}
	C_{k-t-r}^{v-r} \cdot \frac{1}{2^{k-t-r}} \cdot \sum_{u=0}^{\lfloor k d_s \rfloor - i -(v-r)} C_{r}^{u} \cdot \frac{1}{2^{r}} & & {
		\begin{array}{l}
		v-r \in \{0, \dots, k-t-r\}\\
		\text{and}\ \lfloor k d_s \rfloor - i - (v-r) \in \{0, \dots, r\}
		\end{array}
	}  \\
	C_{k-t-r}^{v-r} \cdot \frac{1}{2^{k-t-r}} & & {
		\begin{array}{l}
		v-r \in \{0, \dots, k-t-r\}\\
		\text{and}\ \lfloor k d_s \rfloor - i - (v-r) > r
		\end{array}
	}  \\
	0 & & {\text{otherwise}}
	\end{array} \right. \label{Last_Prob_03}
	\end{align}
	\hrulefill
\end{figure*}
Note that \eqref{language_description} can be obtained by realizing that $\mathbb{P}[\textsf{d}_x(X, \tilde{Y}) = 1|X=S, \tilde{Y}=e]=1$ and $\mathbb{P}[\textsf{d}_s(S, \tilde{Z}) = \textsf{d}_x(X, \tilde{Y})|X=S, \tilde{Y} \neq e]=1$. 

Since there exists at least one code with an excess probability no greater than the ensemble average, the code satisfying \eqref{case_study_achievability_bound} must exist.

\end{proof}

The nonasymptotic achievability bound for $(d_s, d_x) \in \mathcal{D}_2 \cup \mathcal{D}_3$ is given by the following theorem.
\begin{theorem} 
	\label{case_study_achievability2}
	(Achievability, EFCF with $(d_s, d_x) \in \mathcal{D}_2 \cup \mathcal{D}_3$): In erased fair coin flips, for $(d_s, d_x) \in \mathcal{D}_2 \cup \mathcal{D}_3$, there exists an $(k, M, d_s, d_x, \epsilon)$ code such that 
	\begin{align}
	\label{case_study_achievability2_bound}
	\epsilon \! \leq \! &\sum_{t=0}^k \mathsf{binopmf}(t; k, \delta) \cdot \sum_{i=0}^t \mathsf{binopmf}(i; t, 1/2)\nonumber\\ 
	&\cdot \bigg(\!1 - \mathsf{binocdf}(\min\{\lfloor k d_s \rfloor \!-\! i, \lfloor k d_x \rfloor \!-\! t\}; k\!-\!t, 1/2)\!\bigg)^M.
	\end{align} 
	
\end{theorem}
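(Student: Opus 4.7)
The plan is to mimic the random coding argument of Theorem \ref{case_study_achievability}, but with the optimal edge distribution specialized to the regime $(d_s,d_x)\in\mathcal{D}_2\cup\mathcal{D}_3$. The decisive structural observation from Theorem \ref{theorem_Rate_Distortion_func} is that, in both of these regimes, $P_{Z^\star Y^\star}$ collapses to the diagonal: $\tilde{Z}=\tilde{Y}$ and both are uniform on $\{0,1\}$, so each random codeword $(\tilde{Z}^k,\tilde{Y}^k)$ satisfies $\tilde{Z}^k=\tilde{Y}^k$ with i.i.d.\ $\mathsf{Bernoulli}(1/2)$ components. I would then draw $M$ such codewords i.i.d.\ and have the encoder pick any codeword that satisfies both distortion constraints whenever one exists, so that the ensemble excess probability is upper bounded by the probability that \emph{no} codeword is jointly good.

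Conditioning on $(X^k,S^k)$, the $M$ codewords are independent, so the ensemble excess probability is at most $\mathbb{E}[(1-p_k(X^k,S^k))^M]$, where $p_k$ denotes the single-codeword joint success probability. To evaluate $p_k$, let $t$ be the number of erasures in $X^k$ and exploit the identities $\textsf{d}_x(e,\tilde{Y}_i)=1$ for $\tilde{Y}_i\in\{0,1\}$ together with $\textsf{d}_s(S_i,\tilde{Z}_i)=\textsf{d}_x(X_i,\tilde{Y}_i)$ on every non-erasure position (where $S_i=X_i$ and $\tilde{Z}_i=\tilde{Y}_i$). These give
\begin{equation*}
	k\textsf{d}_x(X^k,\tilde{Y}^k)=t+B,\qquad k\textsf{d}_s(S^k,\tilde{Z}^k)=A+B,
\end{equation*}
where $A$ counts the mismatches at the $t$ erasure positions and $B$ counts the common mismatches at the remaining $k-t$ positions. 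By the independence and uniformity of the codeword symbols, $A\sim\mathsf{Binomial}(t,1/2)$, $B\sim\mathsf{Binomial}(k-t,1/2)$, and $A$ is independent of $B$ given $(X^k,S^k)$.

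The single-codeword success event is $\{B\leq\min(\lfloor kd_s\rfloor-A,\,\lfloor kd_x\rfloor-t)\}$; conditioning on $A=i$ and summing gives $p_k(X^k,S^k)=\sum_{i=0}^{t}\mathsf{binopmf}(i;t,1/2)\,\mathsf{binocdf}(\min\{\lfloor kd_s\rfloor-i,\lfloor kd_x\rfloor-t\};k-t,1/2)$, which depends on $(X^k,S^k)$ only through $t$. The outer expectation then collapses to $\sum_t\mathsf{binopmf}(t;k,\delta)\,(1-p_k(t))^M$. To reach the stated form of \eqref{case_study_achievability2_bound}, I would invoke Jensen's inequality for the convex function $x\mapsto(1-x)^M$ with respect to the $\mathsf{Binomial}(t,1/2)$ law of $A$, replacing $(1-\mathbb{E}_A[\mathsf{binocdf}])^M$ by $\mathbb{E}_A[(1-\mathsf{binocdf})^M]$. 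This yields exactly the claimed upper bound, and invoking the standard probabilistic argument extracts a deterministic code attaining it.

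The main obstacle, such as it is, lies in the first step: verifying by inspection of the optimal test channels in items (ii) and (iii) of Theorem \ref{theorem_Rate_Distortion_func} that the codewords indeed satisfy $\tilde{Z}^k=\tilde{Y}^k$ with i.i.d.\ $\mathsf{Bernoulli}(1/2)$ symbols in both sub-regimes. Once this collapse is secured, the non-erasure positions contribute a single shared binomial $B$ rather than two independent binomials (as would occur in Theorem \ref{case_study_achievability}), so the joint excess event reduces to a two-dimensional constraint on $(A,B)$; the remainder is binomial bookkeeping together with a one-line Jensen relaxation.
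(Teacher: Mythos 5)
Your structural observation is correct and is the decisive step: by Theorem~\ref{theorem_Rate_Distortion_func}, for $(d_s,d_x)\in\mathcal{D}_2\cup\mathcal{D}_3$ the optimal edge distribution $P_{Z^\star Y^\star}$ is uniform on $\{(0,0),(1,1)\}$, so the random codewords satisfy $\tilde{Z}^k=\tilde{Y}^k$ with i.i.d.\ equiprobable binary entries, and the bookkeeping via $A\sim\mathsf{Binomial}(t,1/2)$ at the erasure positions and a single shared $B\sim\mathsf{Binomial}(k-t,1/2)$ at the remaining positions is right. The gap is in the encoder. You propose ``pick any codeword that satisfies both distortion constraints whenever one exists,'' but the code is a mapping $P_{U|X}$: the encoder observes only $X^k$, not $S^k$, and at the $t$ erasure positions it cannot tell whether the $\textsf{d}_s$ constraint is satisfied. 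Consequently $\mathbb{E}[(1-p_k(X^k,S^k))^M]$ is the excess probability of a genie-aided scheme, which \emph{lower}-bounds, rather than upper-bounds, the excess probability of any admissible encoder. The first inequality in your chain therefore goes the wrong way, and the Jensen step, while a valid inequality, only relates this invalid intermediate quantity to the stated formula; it does not establish that an admissible encoder achieves \eqref{case_study_achievability2_bound}.

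The repair follows the template of Theorem~\ref{case_study_achievability}. Use the admissible encoder from \cite[Theorem 1]{Yang2024Joint}, which for diagonal codewords reduces to $m^\star\in\arg\min_m B_m$ (minimizing $B_m$ minimizes the conditional excess probability given $X^k$). The key fact, used implicitly in \eqref{achi_1}, is that given $X^k$ and the codebook the realized erasure-position distortion $A=\sum_{i\in T}\mathbf{1}\{S_i\neq\tilde{Z}_{m^\star,i}\}$ is Binomial$(t,1/2)$ with a conditional law that does not depend on the codebook; hence $A$ is independent of $\{B_m\}_{m=1}^M$ given $X^k$. Conditioning on $A=i$ first and then averaging over the i.i.d.\ $B_m\sim\mathsf{Binomial}(k-t,1/2)$ yields, with equality, $\mathbb{P}[\text{excess}\mid X^k]=\sum_{i=0}^t\mathsf{binopmf}(i;t,1/2)\bigl(1-\mathsf{binocdf}(\min\{\lfloor kd_s\rfloor-i,\lfloor kd_x\rfloor-t\};k-t,1/2)\bigr)^M$, and summing over $t$ gives \eqref{case_study_achievability2_bound} directly with no Jensen step. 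The Jensen gap you appeal to is exactly the difference between the genie-aided lower bound and the true excess of the implementable encoder, so invoking it conceals rather than proves the result; once the conditional-independence argument is in place the formula emerges tight.
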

\begin{proof}[Proof]
	This proof closely resembles that of Theorem \ref{case_study_achievability} and is omitted here.
\end{proof}

\subsection{Numerical Results}
The second-order approximation in Theorem \ref{theorem_dispersion}, the nonasymptotic converse bound with $\gamma=\log k / 2$ in Theorem \ref{case_study_converse}, the nonasymptotic achievability bound in Theorem \ref{case_study_achievability}, and the asymptotically achievable rate \eqref{R_D_for_D4} are plotted in Fig. \ref{convergence_result} for both maximum admissible distortion settings $(d_x,d_s) = (1.36\delta, 0.88\delta)$ and $(d_x,d_s) = (1.28\delta, 1.12\delta)$ with $\delta = 0.2$ and $\epsilon = 0.1$. Note that the two pairs of maximum admissible distortions fall in $\mathcal{D}_4$ and share the same asymptotically achievable rate $R(d_s,d_x)$. From Fig. \ref{convergence_result}, we observe that the second-order approximations well approximate their corresponding finite-blocklength rates, requiring only slightly adapting the remainder term. 
Furthermore, despite both the $(d_s,d_x)$ settings share the same asymptotically achievable rate $R(d_s,d_x)$, we observe a noticeable gap between their finite-blocklength rates. By Theorem \ref{theorem_dispersion}, we know this gap mainly arises from their distinct dispersions, technically, distinct values of the rate-dispersion function $\tilde{\mathcal{V}}(d_s,d_x)$.

%

\section{Conclusion}
\label{SEC_6}

In this paper, we derived nonasymptotic and second-order converse bounds for the joint data and semantics lossy compression problem. The obtained second-order converse bound, coinciding with the second-order achievability bound in \cite{Yang2024Joint}, implies that the (excess distortion) dispersion of joint data and semantics lossy compression is $\tilde{\mathcal{V}}(d_s,d_x)$ for $(d_s, d_x) \in \textsf{int}(\mathcal{D}_{sx})$. Additionally, since the second-order bounds are derived through asymptotic analysis of the nonasymptotic bounds, their tightness implies that the nonasymptotic bounds proposed in this paper and \cite{Yang2024Joint} are asymptotically tight up to the second order. Finally, we obtained nonasymptotic achievability and converse bounds for the case of erased fair coin flips. Based on these results, we numerically illustrated the accuracy of the second-order approximation.


\appendices

\section{Auxiliary Results}
\label{Auxiliary_Results}

We consider $R_X(d_s,d_x) = R_{P_X}(d_s,d_x)$ as a function of the $|\mathcal{X}|$-dimensional probability vector $P_X$, and define
\begin{equation}
R_{Q_X}(d_s,d_x) \triangleq R_{P_{\bar{X}}}(d_s,d_x),
\end{equation}
where $Q_X$ is a nonnegative $|\mathcal{M}|$-dimensional vector which may not be a probability vector, and $P_{\bar{X}}(x) = Q_X(x)/\sum_{x' \in \mathcal{X}}Q_X(x')$. Then, for each $a \in \mathcal{X}$, define the partial derivatives of $R_{P_X}(d_s,d_x)$ with respect to $P_X(a)$ as
\begin{equation}
\dot{R}_X(a, d_s,d_x) \triangleq \frac{\partial}{\partial Q_{X}(a)}R_{Q_X}(d_s,d_x)\bigg|_{Q_X=P_{X}}.
\end{equation}


\begin{lemma} 
	\label{finite_alphabet_derivative}
	Fix $(d_s,d_x) \in \mathcal{D}_{\mathrm{in}}$. Assume that the alphabet $\mathcal{X}$ is finite and for all $P_{\bar{X}}$ in some neighborhood of $P_X$, $\textsf{supp}(P_{\bar{Z}^\star \bar{Y}^\star}) = \textsf{supp}(P_{Z^\star Y^\star})$, where $P_{\bar{Z}^\star \bar{Y}^\star}$ achieves $R_{\bar{X}}(d_s,d_x)$.
	Then
	\begin{align}
	\dot{R}_X(a, d_s,d_x) =& \jmath_{X}(a,d_s,d_x) - R_X(d_s,d_x), \label{Prop2_c1} \\
	\textrm{Var}\left[\dot{R}_X(X, d_s,d_x)\right] =& \textrm{Var}\left[\jmath_{X}(X,d_s,d_x)\right]. \label{Prop2_c2}
	\end{align}
	
\end{lemma}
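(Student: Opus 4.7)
The plan is to follow the envelope-theorem/Danskin approach that is standard for computing derivatives of rate--distortion functions with respect to the source distribution (compare the analogous finite-alphabet derivative lemmas used in the dispersion analyses of \cite{Kostina2012Fixed, Kostina2016Nonasymptotic}). First I would use the chain rule to translate the derivative with respect to the unnormalized vector $Q_X$ into a directional derivative on the probability simplex. With $P_{\bar{X}}(x) = Q_X(x)/\sum_{x'} Q_X(x')$, evaluation at $Q_X = P_X$ gives $\partial P_{\bar{X}}(x)/\partial Q_X(a)\bigr|_{Q_X=P_X} = \delta_{\{x=a\}} - P_X(x)$, so $\dot R_X(a,d_s,d_x)$ equals the directional derivative of $P_X \mapsto R_X(d_s,d_x)$ in the simplex direction $e_a - P_X$.

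Next, invoking identity \eqref{property3} from Property \ref{tilted_information_property}, I would write $R_{\bar{X}}(d_s,d_x) = \mathbb{E}_{P_{\bar{X}}}[\jmath_{\bar{X}}(X,d_s,d_x)]$, where $\jmath_{\bar{X}}$ is the tilted information formed from the optimizers $(\lambda_s^{\bar{\star}}, \lambda_x^{\bar{\star}}, P_{\bar{Z}^\star \bar{Y}^\star})$ attached to the perturbed source. By the support-stability and twice-continuous-differentiability hypotheses in Assumption \ref{differentiability}, these optimizers vary smoothly with $P_{\bar{X}}$ in a neighbourhood of $P_X$ and satisfy the associated KKT stationarity conditions for the constrained minimization in \eqref{rate_distortion_function}, with Lagrange multipliers given by \eqref{lambda_s}--\eqref{lambda_x}.

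Now I would apply the envelope theorem: because the optimizers are stationary, the first-order contribution from perturbations of $(\lambda_s^{\bar{\star}}, \lambda_x^{\bar{\star}}, P_{\bar{Z}^\star \bar{Y}^\star})$ vanishes, so the directional derivative of $R_{\bar{X}}(d_s,d_x)$ reduces to the derivative of $\mathbb{E}_{P_{\bar{X}}}[\jmath_X(X,d_s,d_x)]$ taken with $\jmath_X$ frozen at the values corresponding to the optimizers at $P_X$. Combined with the chain-rule step, this yields
\[
\dot R_X(a,d_s,d_x) = \sum_x (\delta_{\{x=a\}} - P_X(x))\,\jmath_X(x,d_s,d_x) = \jmath_X(a,d_s,d_x) - R_X(d_s,d_x),
\]
using $R_X(d_s,d_x) = \sum_x P_X(x) \jmath_X(x,d_s,d_x)$ from \eqref{property3}; this proves \eqref{Prop2_c1}. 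Identity \eqref{Prop2_c2} then follows immediately because $R_X(d_s,d_x)$ is a deterministic constant, so $\textrm{Var}[\dot R_X(X,d_s,d_x)] = \textrm{Var}[\jmath_X(X,d_s,d_x) - R_X(d_s,d_x)] = \textrm{Var}[\jmath_X(X,d_s,d_x)]$.

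The main obstacle is the rigorous justification of the envelope step, which is exactly what Assumption \ref{differentiability} is designed to supply: without support stability the optimal test channel may shift its support letters as $P_{\bar{X}}$ varies, and the Lagrange multipliers and output distribution can lose smoothness at such boundary points, so the envelope identity may fail. Once these hypotheses are granted, the remainder of the argument is essentially the chain-rule bookkeeping above.
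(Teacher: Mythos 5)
Your proof is correct and takes essentially the same route as the paper's: the $Q_X$-to-simplex chain rule, the representation $R_X = \mathbb{E}[\jmath_X]$ from Property \ref{tilted_information_property}, and the envelope-type observation that the perturbation of the optimizers contributes nothing at first order, leaving only the derivative of the outer $P_{\bar{X}}$-expectation. The paper makes that envelope step explicit via the computation in \eqref{Prop2_step0}--\eqref{Prop2_step5} (which uses tightness of both distortion constraints on $\textsf{int}(\mathcal{D}_{sx})$) and handles the degenerate-multiplier regimes $\textsf{int}(\mathcal{D}_{\bar{s}x})\cup\textsf{int}(\mathcal{D}_{s\bar{x}})$ and $\textsf{int}(\mathcal{D}_{\bar{s}\bar{x}})$ as separate cases, whereas your abstract envelope-theorem invocation covers the same content more compactly but leaves that verification implicit.
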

\begin{proof}[Proof]
	Recall that $\mathcal{D}_{\mathrm{in}} \triangleq \textsf{int}(\mathcal{D}_{sx}) \cup \textsf{int}(\mathcal{D}_{\bar{s}x}) \cup \textsf{int}(\mathcal{D}_{s\bar{x}}) \cup \textsf{int}(\mathcal{D}_{\bar{s}\bar{x}})$.
	Since, in the cases when $(d_s,d_x) \in \textsf{int}(\mathcal{D}_{\bar{s}x}) \cup \textsf{int}(\mathcal{D}_{s\bar{x}})$, both $R_X(d_s,d_x)$ and $\jmath_{X}(x,d_s,d_x)$ degenerate into scenarios previously addressed in \cite{Kostina2016Nonasymptotic}, where \eqref{Prop2_c1} and \eqref{Prop2_c2} have already been established, we now turn our attention to the remaining two cases.
	
	We first consider the case when $(d_s,d_x) \in \textsf{int}(\mathcal{D}_{\bar{s}\bar{x}})$. Define $d_{s,\max} \triangleq \min_{z \in \widehat{\mathcal{M}}} \mathbb{E}[\bar{\textsf{d}}_s(X,z)]$ and $d_{x,\max} \triangleq \min_{y \in \widehat{\mathcal{X}}} \mathbb{E}[\textsf{d}_x(X,y)]$. We note that
	\begin{align}
		\textsf{int}(\mathcal{D}_{\bar{s}\bar{x}}) = \{(d_s,d_x): d_s > d_{s,\max}, d_x> d_{x, \max}\},
	\end{align}
	and for all $(d_s,d_x) \in \textsf{int}(\mathcal{D}_{\bar{s}\bar{x}})$, we have $\lambda_s^\star = 0$, $\lambda_x^\star=0$, and $R_X(d_s,d_x) = 0$. Consequently, the right-hand side of \eqref{Prop2_c1} equals to $0$.
	Since $d_{s,\max}$ and $d_{x,\max}$ are both continuous functions of the distribution of $X$, there exists a neighborhood of $P_X$ such that for all $P_{\bar{X}}$ in it, we have $d_s > d_{s,\max}(P_{\bar{X}})$ and $d_x> d_{x, \max}(P_{\bar{X}})$ still hold, implying $R_{\bar{X}}(d_s,d_x) = 0$ in this neighborhood. Accordingly, the left-hand side of \eqref{Prop2_c1} also equals to $0$. This leads to the conclusion that \eqref{Prop2_c1} and \eqref{Prop2_c2} hold for $(d_s,d_x) \in \textsf{int}(\mathcal{D}_{\bar{s}\bar{x}})$. 
	
	
	We now consider the case when $(d_s,d_x) \in \textsf{int}(\mathcal{D}_{sx})$. First, we have \eqref{Prop2_step0}-\eqref{Prop2_step5}, 
	\begin{figure*}[!t]
		\begin{align}
		&\frac{\partial}{\partial Q_X(a)}\mathbb{E}[\jmath_{\bar{X}}(X,d_s,d_x)]\Big|_{Q_X=P_X} \label{Prop2_step0} \\
		=&\frac{\partial}{\partial Q_X(a)}\mathbb{E}[\imath_{\bar{X};\bar{Z}^\star \bar{Y}^\star}(X;Z^\star,Y^\star)] + \lambda_{s, \bar{X}} \mathbb{E}[\bar{\textsf{d}}_s(X,Z^\star) - d_s] + \lambda_{x, \bar{X}} \mathbb{E}[\textsf{d}_x(X,Y^\star) - d_x]\Big|_{Q_X=P_X} \label{Prop2_step1} \\
		=&\frac{\partial}{\partial Q_X(a)}\mathbb{E}[\imath_{\bar{X};\bar{Z}^\star \bar{Y}^\star}(X;Z^\star,Y^\star)]\Big|_{Q_X=P_X} \label{Prop2_step2}\\
		=&\frac{\partial}{\partial Q_X(a)}\mathbb{E}[\log P_{\bar{X}| \bar{Z}^\star \bar{Y}^\star}(X; Z^\star, Y^\star)]\Big|_{Q_X=P_X} -
		\frac{\partial}{\partial Q_X(a)}\mathbb{E}[\log P_{\bar{X}}(X)]\Big|_{Q_X=P_X} \label{Prop2_step3}\\
		=&\log e \cdot \frac{\partial}{\partial Q_X(a)}\mathbb{E}\left[\frac{ P_{\bar{X}| \bar{Z}^\star \bar{Y}^\star}(X; Z^\star, Y^\star)}{P_{X| Z^\star Y^\star}(X; Z^\star, Y^\star)}  \right]\Big|_{Q_X=P_X} - \log e \cdot \frac{\partial}{\partial Q_X(a)}\mathbb{E}\left[ \frac{P_{\bar{X}}(X)}{P_{X}(X)} \right]\Big|_{Q_X=P_X} \label{Prop2_step4}\\
		=&0 \label{Prop2_step5}
		\end{align}
		\hrulefill
	\end{figure*}
	where \eqref{Prop2_step1} is by \eqref{d_tilted_information_of_surrogate} and the assumption $\textrm{supp}(P_{\bar{Z}^\star \bar{Y}^\star}) = \textrm{supp}(P_{Z^\star Y^\star})$, 
	\eqref{Prop2_step2} holds since $\mathbb{E}[\bar{\textsf{d}}_s(X,Z^\star) - d_s] =0$ and $\mathbb{E}[\textsf{d}_x(X,Y^\star) - d_x]=0$ for $(d_s,d_x) \in \textsf{int}(\mathcal{D}_{sx})$,
	and \eqref{Prop2_step5} holds since the two expectation terms in \eqref{Prop2_step4} are always equal to $0$. By \eqref{Prop2_step5}, we have
	\begin{align}
	&\dot{R}_X(a, d_s,d_x)\\ =&\frac{\partial}{\partial Q_{X}(a)}\mathbb{E}[\jmath_{\bar{X}}(\bar{X},d_s,d_x)]\bigg|_{Q_X=P_{X}}\\
	=&\jmath_{X}\!(a,d_s,d_x) \!-\! R_X\!(d_s,d_x) \!+\! \frac{\partial}{\partial Q_{X}(a)}\!\mathbb{E}[\jmath_{\bar{X}}\!(X,d_s,d_x)]\bigg|_{Q_X=P_{X}}\\
	=&\jmath_{X}(a,d_s,d_x) - R_X(d_s,d_x),
	\end{align}
	completing the proof of \eqref{Prop2_c1}. \eqref{Prop2_c2} is an immediate corollary to \eqref{Prop2_c1}.
\end{proof}

The Berry–Ess\'een central limit theorem (CLT) introduced in the following is of fundamental importance in our second-order analysis.
\begin{lemma} 
	\label{Berry_Esseen}
	(Berry–Ess\'een CLT, e.g., \cite[Theorem 13]{Kostina2012Fixed}, \cite[Chapter XVI.5, Theorem 2]{Feller1971Introduction}):
	Fix an integer $k > 0$. Let random variables $\{W_i \in \mathbb{R}\}_{i=1}^k $ be independent. Denote
	\begin{align}
	\mu_k =& \frac{1}{k}\sum_{i=1}^k \mathbb{E}[W_i],\\
	V_k =& \frac{1}{k}\sum_{i=1}^k \textrm{Var}[W_i],\\
	T_k =&\frac{1}{k}\sum_{i=1}^k \mathbb{E}\left[|W_i - \mathbb{E}[W_i]|^3\right],\\
	B_k =&6 \frac{T_k}{V_k^{3/2}}.
	\end{align}
	Then, for any $t > 0$,
	\begin{align}
	\label{P_Berry_Esseen}
	\left|\mathbb{P}\left[\sum_{i=1}^k W_i > k \left(\mu_k + t\sqrt{\frac{V_k}{k}}\right)\right] - Q(t)\right| \leq \frac{B_k}{\sqrt{k}},
	\end{align}
	where $Q(t)$ denotes the complementary standard Gaussian cumulative distribution function.
\end{lemma}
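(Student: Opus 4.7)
The plan is to analyze the nonasymptotic Corollaries \ref{converse_bound_corollary} and \ref{converse_bound_corollary2} in the block limit, splitting according to whether $\lambda_s^\star=0$. When $(d_s, d_x)\in \textsf{int}(\mathcal{D}_{\bar{s}x})\cup\textsf{int}(\mathcal{D}_{\bar{s}\bar{x}})$, slackness of the semantic constraint forces $\lambda_s^\star=0$, and Proposition \ref{Prop_relationship_V_tildeV} collapses $\tilde{\mathcal{V}}$ to $\mathcal{V}$; otherwise $\lambda_s^\star>0$ and the conditional-variance term of $\tilde{\mathcal{V}}$ must be extracted from the $S\mid X$ randomness through a more careful choice of auxiliary conditional.

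In the easy regime I apply Corollary \ref{converse_bound_corollary2}. Under Assumptions \ref{memoryless_sources}--\ref{Finite_alphabets_for_asymptotic} the $k$-letter optimal test channel factorizes, hence $\jmath_{X^k}(x^k, d_s, d_x)=\sum_{i=1}^{k}\jmath_{X}(x_i, d_s, d_x)$, so that the quantity inside the probability is an i.i.d.\ sum with mean $R_{S,X}(d_s, d_x)$, variance $\mathcal{V}(d_s, d_x)=\tilde{\mathcal{V}}(d_s, d_x)$, and bounded third absolute moment. Choosing $\gamma=\tfrac{1}{2}\log k$ and invoking the Berry-Ess\'een CLT (Lemma \ref{Berry_Esseen}) yields $\epsilon \geq Q\bigl((\log M - k R_{S,X})/\sqrt{k\tilde{\mathcal{V}}}\bigr)-O(1/\sqrt{k})$, and a smooth inversion of $Q$ concludes this regime with an $O(\log k)$ remainder.

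In the hard regime I apply Corollary \ref{converse_bound_corollary} with $\lambda_s=\lambda_s^\star$, $\lambda_x=\lambda_x^\star$, and a type-dependent product auxiliary $P_{\bar X^k\mid \bar Z^k=z^k,\bar Y^k=y^k}(x^k)=\prod_i P^{(\hat P_{x^k})}_{X\mid Z^\star Y^\star}(x_i\mid z_i,y_i)$, where the superscript indicates that the optimal test channel is computed when the source marginal is replaced by the empirical type of $x^k$. Property \ref{tilted_information_property} then identifies, on the support of the optimal product law, the information-density-plus-distortion combination with $\jmath_X$ evaluated at that type. A second-order Taylor expansion of $k R_{\hat P_{x^k}}(d_s, d_x)$ about $P_X$ based on Lemma \ref{finite_alphabet_derivative} (using Assumptions \ref{differentiability}--\ref{Assumption_positive_definite}) then recasts the expression inside $\mathbb{P}[\cdot]$, on the set of typical types, as
\[
\sum_{i=1}^{k}\jmath_{X}(X_i, d_s, d_x) + \lambda_s^\star\bigl[\textsf{d}_s(S_i, z_i)-\bar{\textsf{d}}_s(X_i, z_i)\bigr] - \log M + O(\log k).
\]
By identity \eqref{relation_of_tilted_information} this sum equals $\sum_i \tilde{\jmath}_{S,X}(S_i, X_i, z_i, y_i, d_s, d_x)-\log M+O(\log k)$, whose variance equals exactly $k\tilde{\mathcal{V}}(d_s, d_x)$ when $(z_i, y_i)$ are drawn from the optimal product law; Berry-Ess\'een then produces the required Gaussian lower bound.

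The main obstacle is the $\inf_{z^k, y^k}$ inside Corollary \ref{converse_bound_corollary}: conditional on $X^k$, the variance of $\lambda_s^\star[\textsf{d}_s(S_i, z_i)-\bar{\textsf{d}}_s(X_i, z_i)]$ depends on $z_i$, so the adversary could in principle try to shrink the $\tilde{\mathcal{V}}$ component. I plan to restrict the inf to $\textsf{supp}(P_{Z^\star Y^\star})^k$, valid via Assumption \ref{differentiability}, and apply a conditional Berry-Ess\'een uniformly over this finite support; the finite-alphabet Assumption \ref{Finite_alphabets_for_asymptotic} yields uniform third-moment bounds, while Assumption \ref{Assumption_positive_definite} keeps the dispersion strictly positive so that the $\inf$ penalty is only $O(1/\sqrt{k})$. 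This penalty, the Berry-Ess\'een rate, the Taylor residual, the $(k+1)^{|\mathcal{A}|}$ type count, and the choice $\gamma=\tfrac{1}{2}\log k$ together constitute the $O(\log k)$ remainder in the statement.
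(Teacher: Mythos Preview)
Your proposal does not address the stated lemma. Lemma~\ref{Berry_Esseen} is the classical Berry--Ess\'een central limit theorem, which the paper simply cites from \cite{Kostina2012Fixed} and \cite{Feller1971Introduction} and does not prove; its content is a quantitative normal approximation for sums of independent real random variables and has nothing to do with tilted information, types, or the auxiliary conditional $P_{\bar X\mid\bar Z\bar Y}$.

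What you have written is instead a sketch of the asymptotic analysis behind Theorem~\ref{theorem_Gaussian_approximation} (the second-order converse), which \emph{uses} Lemma~\ref{Berry_Esseen} as a tool. If your intention was to prove Lemma~\ref{Berry_Esseen}, you would need the classical argument via characteristic functions and Esseen's smoothing inequality (as in Feller), none of which appears in your text. If your intention was to prove Theorem~\ref{theorem_Gaussian_approximation}, then you have targeted the wrong statement; in that case your outline is broadly aligned with the paper's Appendix~\ref{proof_converse_theorem_Gaussian_approximation}, though note that the paper handles the $\inf_{z^k,y^k}$ not by a uniform Berry--Ess\'een over $\textsf{supp}(P_{Z^\star Y^\star})^k$ as you propose, but by reducing to a minimization over conditional types $P_{\bar Z\bar Y\mid\bar X}$ and invoking \cite[Theorem~11]{Kostina2016Nonasymptotic}, and it uses a mixture over all conditional $k$-types (not a single type-dependent product) for the auxiliary $P_{\bar X^k\mid\bar Z^k\bar Y^k}$.
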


\section{Proof of the Converse Part of Theorem \ref{theorem_Gaussian_approximation}} \label{proof_converse_theorem_Gaussian_approximation}

If $\lambda_s^\star = 0$, by Corollary \ref{converse_bound_corollary2}, Theorem \ref{theorem_Gaussian_approximation} can be derived directly through the proof methodology employed for the converse part of \cite[Theorem 12]{Kostina2012Fixed}. In the following, we consider the case $\lambda_s^\star > 0$.

Our subsequent proof is adapted from the converse proof of \cite[Theorem 5]{Kostina2016Nonasymptotic}. Let
\begin{equation}
\label{log_M}
\begin{aligned}
\log M = &k R_X(d_s, d_x) + \sqrt{k \tilde{\mathcal{V}}(d_s,d_x)} Q^{-1}(\epsilon_k)\\ 
&- \frac{1}{2} \log k - \log |\mathcal{P}_{[k]}| - c |\mathcal{A}|\log k,
\end{aligned}
\end{equation}
where $\epsilon_k = \epsilon + O\left(\frac{\log k}{\sqrt{k}}\right)$, $c$ is a constant that will be specified in the sequel, and $\mathcal{P}_{[k]}$ denotes the set of all conditional $k$-types $\hat{\mathcal{S}} \times \hat{\mathcal{A}} \to \mathcal{A}$. By type counting, we have $|\mathcal{P}_{[k]}| \leq (k+1)^{|\mathcal{A}| |\hat{\mathcal{S}}|  |\hat{\mathcal{A}}|}$.

We weaken the bound in Corollary \ref{converse_bound_corollary} by choosing
\begin{gather}
P_{\!\bar{X}^k|\bar{Z}^k=z^k,\bar{Y}^k=y^k}\!(x^k)\!=\!\frac{1}{|\mathcal{P}_{[k]}|}\!\!\sum_{P_{X\!|\!ZY} \in \mathcal{P}_{[k]}}\! \prod_{i=1}^k\! P_{X|Z=z_i, Y\!=y_i} \!(x_i), \label{sum_P_X_ZY} \\
\lambda_s= k \lambda_s(x^k) = -k\frac{\partial R_{\textrm{type}(x^k)}(d_s,d_x)}{\partial d_s}\\
\lambda_x= k \lambda_x(x^k) = -k\frac{\partial R_{\textrm{type}(x^k)}(d_s,d_x)}{\partial d_x}\\
\gamma=\frac{1}{2} \log k. \label{gamma_value}
\end{gather}
By Corollary \ref{converse_bound_corollary}, any $(k, M, d_s, d_x, \epsilon')$ code with $M$ given in \eqref{log_M} must satisfy
\begin{align} 
\label{block_converse_bound}
\epsilon' \geq& \mathbb{E}\Big[\min_{z^k \in \hat{\mathcal{S}}^k, y^k \in \hat{\mathcal{A}}^k}
\mathbb{P}\Big[\imath_{\bar{X}^k|\bar{Z}^k \bar{Y}^k\Vert X^k}(X^k;z^k,y^k)\nonumber\\
&+ k \lambda_s(X^k)(\textsf{d}_s(S^k,z^k) - d_s)\nonumber\\
&+ k \lambda_x(X^k) (\textsf{d}_x(X^k,y^k) - d_x) \geq \log M + \gamma \Big| X^k \Big]\Big]\nonumber\\ 
&- \exp(-\gamma).
\end{align}

Below, we prove that the right side of \eqref{block_converse_bound} is lower-bounded by $\epsilon$ for $M$ in \eqref{log_M}, implying that the logarithm of $M$ in any $(k, M, d_s, d_x, \epsilon)$ code must be no less than the right side of \eqref{log_M}. For each triple $(x^k, z^k, y^k)$, abbreviate
\begin{align}
\textrm{type}\left(x^k\right)=&P_{\bar{X}}, \\
\textrm{type}\left(z^k, y^k|x^k\right)=&P_{\bar{Z}\bar{Y}|\bar{X}}, \\
\lambda_s(x^k)=& \lambda_{s,\bar{X}}, \\
\lambda_x(x^k)=& \lambda_{x,\bar{X}},
\end{align}
and define independent random variables
\begin{equation}
	\begin{aligned}
	W_i \triangleq& I(\bar{X}; \bar{Z}, \bar{Y}) + \lambda_{s, \bar{X}}(\textsf{d}_s(S_i, z_i) - d_s)\\ 
	&+ \lambda_{x, \bar{X}}(\textsf{d}_x(x_i, y_i) - d_x),\ i=1,\dots,k,
	\end{aligned}
\end{equation}
where $S_i \sim P_{S|X=x_i}$, $i=1,\dots,k$. Since 
\begin{gather}
\mathbb{E}\!\Big[\sum_{i=1}^k \textsf{d}_s(S_i, z_i)\Big]=k \mathbb{E}\!\left[\mathbb{E}\left[\textsf{d}_s\!(S, \bar{Z})|\bar{X}, \bar{Z}\right]\right]\!=\!k \mathbb{E}\left[\textsf{d}_s\!(S, \bar{Z})\right]\!, \\
\sum_{i=1}^k \textsf{d}_x(x_i, y_i) =k\mathbb{E}\left[\textsf{d}_x(\bar{X}, \bar{Y})\right], \\
\begin{aligned}
&\text{Var}\Big[\sum_{i=1}^k \textsf{d}_s(S_i, z_i)\Big]\\
=& k\mathbb{E}\big[\mathbb{E}\big[\big(\textsf{d}_s(S, \bar{Z})- \mathbb{E}\big[\textsf{d}_s(S, \bar{Z})|\bar{X}, \bar{Z}\big]\big)^2| \bar{X}, \bar{Z}\big]\big]\\
=& k\textrm{Var}\left[\textsf{d}_s(S, \bar{Z})|\bar{X}, \bar{Z}\right],
\end{aligned}
\end{gather}
where $P_{S\bar{X}\bar{Z}\bar{Y}} = P_{S|X}P_{\bar{X}\bar{Z}\bar{Y}}$,
in the notation of \cite[Theorem 11]{Kostina2016Nonasymptotic}, we have
\begin{align}
\mu_k(P_{\bar{Z}\bar{Y}|\bar{X}})=&I(\bar{X};\bar{Z}, \bar{Y}) + \lambda_{s, \bar{X}} \left(\mathbb{E}\left[\textsf{d}_s(S,\bar{Z})\right] - d_s\right)\nonumber \\ 
&+ \lambda_{x, \bar{X}} \left(\mathbb{E}\left[\textsf{d}_x(\bar{X},\bar{Y})\right] - d_x\right), \\
V_k(P_{\bar{Z}\bar{Y}|\bar{X}}) =&\lambda_{s,\bar{X}}^2 \textrm{Var}\left[\textsf{d}_s(S, \bar{Z})|\bar{X}, \bar{Z}\right], \\
T_k(P_{\bar{Z}\bar{Y}|\bar{X}})=& \lambda_{s,\bar{X}}^3 \mathbb{E}\!\left[\!\left|\textsf{d}_s(S, \bar{Z}) \!-\! \mathbb{E}\!\left[\textsf{d}_s(S,\bar{Z})|\bar{X}, \bar{Z}\right]\!\right|^3\!\right]\!\!.
\end{align}

Besides, we can write
\begin{align}
&\imath_{\bar{X}^k|\bar{Z}^k \bar{Y}^k\Vert X^k}(x^k;z^k,y^k)
+ k \lambda_s(x^k)(\textsf{d}_s(S^k,z^k) - d_s)\nonumber\\
&+ k \lambda_x(x^k) (\textsf{d}_x(x^k,y^k) - d_x)\\
\geq&kI(\bar{X};\bar{Z}, \bar{Y}) + kD(\bar{X}\Vert X) + \lambda_{s, \bar{X}} \Big(\sum_{i=1}^k\textsf{d}_s(S_i,z_i) - k d_s\Big)\nonumber\\ 
&+ \lambda_{x, \bar{X}} \Big(\sum_{i=1}^k\textsf{d}_x(x_i,y_i) - k d_x\Big) - \log |\mathcal{P}_{[k]}| \label{lower_bound_i1} \\
=&\sum_{i=1}^k W_i + kD(\bar{X}\Vert X) - \log |\mathcal{P}_{[k]}|\\
\geq&\sum_{i=1}^k W_i - \log |\mathcal{P}_{[k]}|, \label{lower_bound_Wi}
\end{align}
where \eqref{lower_bound_i1} is by lower-bounding the sum in \eqref{sum_P_X_ZY} using the term containing $P_{\bar{X}|\bar{Z}\bar{Y}}$, which is defined through $P_{\bar{X}} P_{\bar{Z}\bar{Y}|\bar{X}}$.
Weakening \eqref{block_converse_bound} using \eqref{lower_bound_Wi}, we have
\begin{align} 
\label{block_converse_bound2}
\epsilon' \!\geq& \mathbb{E}\Big[\!\min_{z^k \in \hat{\mathcal{S}}^k, y^k \in \hat{\mathcal{A}}^k}\!
\mathbb{P}\Big[\sum_{i=1}^k W_i \geq \log M + \gamma + \log |\mathcal{P}_{[k]}| \Big| X^k  \Big]\Big]\nonumber\\ 
&- \exp(-\gamma).
\end{align}

Define the typical set of $x^k$, $\mathcal{T}_k$, as
\begin{equation}
\label{definition_typical_set}
\mathcal{T}_k \triangleq \left\{x^k \in \mathcal{A}^k: \|\textrm{type}(x^k) - P_X\|^2 \leq \left|\mathcal{A}\right|\frac{\log k}{k}\right\},
\end{equation}
where $\|\cdot\|$ denotes the Euclidean norm. Note that by \cite[Lemma 1]{Kostina2016Nonasymptotic}, 
\begin{equation}
\label{Xk_doesnot_typical}
\mathbb{P}\left[X^k \notin \mathcal{T}_k \right] \leq \frac{2 |\mathcal{A}|}{\sqrt{k}}.
\end{equation}
Next, we evaluate the minimum in \eqref{block_converse_bound2} for $x^k \in \mathcal{T}_k$.

If $V_k(P_{Z^\star Y^\star|X}) = 0$, we have $\textsf{d}_s(S,z) - \bar{\textsf{d}}_s(x,z) \overset{a.s.}{=} 0$ for all $z \in \textrm{supp}(P_{Z^\star})$. Therefore, the considered problem simplifies to a noiseless two-constraint source coding problem \cite[Section VI]{Blahut1972Computation}, \cite[Problem 10.19]{Cover2006Elements}, allowing us to establish our second-order result in a manner analogous to the proof of \cite[Theorem 12]{Kostina2012Fixed}. In the following, we assume that $V_k(P_{Z^\star Y^\star|X}) > 0$. Similar to the argument in the converse proof of \cite[Theorem 5]{Kostina2016Nonasymptotic},
the conditions of \cite[Theorem 11]{Kostina2016Nonasymptotic} are satisfied by $\{W_i\}_{i=1}^k$, with $\mu_k^\star = \mu_k(P_{\bar{Z}^\star\bar{Y}^\star| \bar{X}})$ and $V_k^\star = V_k(P_{\bar{Z}^\star\bar{Y}^\star| \bar{X}})$. Therefore, denoting
\begin{equation}
\Delta_k(P_{\bar{X}}) = \log M + \gamma + \log |\mathcal{P}_{[k]}| - k R_{\bar{X}}(d_s,d_x),
\end{equation}
where $M$ and $\gamma$ are chosen in \eqref{log_M} and \eqref{gamma_value}, by \cite[Theorem 11]{Kostina2016Nonasymptotic}, we have
\begin{align}
\label{Q_lower_bound_1}
&\min_{P_{\bar{Z}\bar{Y}|\bar{X}}} \mathbb{P}\Big[\sum_{i=1}^k W_i \geq \log M + \gamma + \log |\mathcal{P}_{[k]}| \Big| \textrm{type}(X^k) = P_{\bar{X}} \Big]\nonumber\\ 
&\geq Q\Bigg(\frac{\Delta_k(P_{\bar{X}})}{ \lambda_{s,\bar{X}}\sqrt{k \textrm{Var}\left[\textsf{d}_s(S, \bar{Z}^{\star})|\bar{X}, \bar{Z}^{\star}\right]}}\Bigg) - \frac{K}{\sqrt{k}},
\end{align}
where $K > 0$ is that in \cite[Theorem 11]{Kostina2016Nonasymptotic}.

By assumption \ref{differentiability} and \eqref{Prop2_c2}, we can apply a Taylor series expansion in a neighborhood of $P_X$ to $\frac{1}{ \lambda_{s,\bar{X}}\sqrt{\textrm{Var}\left[\textsf{d}_s(S, \bar{Z}^{\star})|\bar{X}, \bar{Z}^{\star}\right]}}$. Consequently, for some scalars $a$ and $K_1 \geq 0$, we have
\begin{align}
&Q\left(\frac{\Delta_k(P_{\bar{X}})}{ \lambda_{s,\bar{X}}\sqrt{k \textrm{Var}\left[\textsf{d}_s(S, \bar{Z}^{\star})|\bar{X}, \bar{Z}^{\star}\right]}}\right)\\
\bluemark{\geq}& Q\left(\frac{\Delta_k(P_{\bar{X}})}{ \lambda_{s,X}\sqrt{k \textrm{Var}\left[\textsf{d}_s(S, Z^\star)|X, Z^\star \right]}}\left(1 + a \sqrt{\frac{\log k}{k}}\right)\right)\\
\geq& Q\left(\frac{\Delta_k(P_{\bar{X}})}{ \lambda_{s,X}\sqrt{k \textrm{Var}\left[\textsf{d}_s(S, Z^\star)|X, Z^\star \right]}}\right) - K_1 \frac{\log k}{\sqrt{k}}, \label{Q_lower_bound1}
\end{align}
where \eqref{Q_lower_bound1} holds since
$Q(x + \xi) \geq Q(x) - \frac{|\xi|^+}{\sqrt{2 \pi}}$
and $\Delta_k(P_{\bar{X}}) = O(\sqrt{k \log k})$ for $x^k \in \mathcal{T}_k$.

Besides, for $x^k \in \mathcal{T}_k$, there exists $c > 0$ such that
\begin{align}
&R_{\bar{X}}(d_s,d_x) \nonumber \\ 
\geq& R_{X}(d_s,d_x) + \sum_{a \in \mathcal{A}}(P_{\bar{X}}(a) - P_{X}(a)) \dot{R}_X(a, d_s,d_x)\nonumber\\ 
&- c \|P_{\bar{X}} - P_X\|^2 \label{Taylor_apply} \\
=& R_{X}(d_s,d_x) + \frac{1}{k} \sum_{i=1}^k \dot{R}_X(x_i, d_s, d_x)\nonumber\\ 
&- \mathbb{E}\left[\dot{R}_X(X, d_s, d_x)\right] - c \|P_{\bar{X}} - P_X\|^2\\
=& \mathbb{E}\left[\jmath_{X}(\bar{X},d_s,d_x)\right] - c \|P_{\bar{X}} - P_X\|^2 \label{app_Lemma1} \\
\geq& \mathbb{E}\left[\jmath_{X}(\bar{X},d_s,d_x)\right] - c |\mathcal{A}| \frac{\log k}{k}, \label{R_X_bar_lower_bound}
\end{align}
where \eqref{Taylor_apply} is by Taylor's theorem, applicable because assumption \ref{differentiability} holds, \eqref{app_Lemma1} is by Lemma \ref{finite_alphabet_derivative}, and \eqref{R_X_bar_lower_bound} is by the definition \eqref{definition_typical_set}. Then, by introducing independent random variable $G \sim \mathcal{N}(0, 1)$, we have
\begin{align}
&Q\left(\frac{\Delta_k(P_{\bar{X}})}{ \lambda_{s,X}\sqrt{k \textrm{Var}\left[\textsf{d}_s(S, Z^\star)|X, Z^\star \right]}}\right)\\
=& \mathbb{P}\Big[k R_{\bar{X}}(d_s,d_x) + \lambda_{s,X}\sqrt{k \textrm{Var}\left[\textsf{d}_s(S, Z^\star)|X, Z^\star \right]} G\nonumber\\ 
&\geq \log M + \gamma + \log |\mathcal{P}_{[k]}| \Big]\\
\geq& \mathbb{P}\Big[k\mathbb{E}\left[\jmath_{X}(\bar{X},d_s,d_x)\right] + \lambda_{s,X}\sqrt{k \textrm{Var}\left[\textsf{d}_s(S, Z^\star)|X, Z^\star \right]} G\nonumber\\ 
&\geq \log M + a_k\Big], \label{lower_bound_Q_2}
\end{align}
where $a_k \triangleq \gamma + \log |\mathcal{P}_{[k]}| + c |\mathcal{A}| \log k$.

Finally, we have \eqref{step1}-\eqref{step7},
\begin{figure*}[!t]
\begin{align}
\epsilon'
\geq& \mathbb{E}\left[\min_{z^k \in \hat{\mathcal{S}}^k, y^k \in \hat{\mathcal{A}}^k}
\mathbb{P}\left[\sum_{i=1}^k W_i \geq \log M + \gamma + \log |\mathcal{P}_{[k]}| \bigg| X^k  \right]\right] - \exp(-\gamma) \label{step1}\\
=&\mathbb{E}\left[\min_{P_{\bar{Z}\bar{Y}|\bar{X}}} \mathbb{P}\left[\sum_{i=1}^k W_i \geq \log M + \gamma + \log |\mathcal{P}_{[k]}| \bigg| \textrm{type}(X^k) = P_{\bar{X}} \right]\right] - \exp(-\gamma) \label{step2}\\
\geq&\mathbb{E}\left[\min_{P_{\bar{Z}\bar{Y}|\bar{X}}} \mathbb{P}\left[\sum_{i=1}^k W_i \geq \log M + \gamma + \log |\mathcal{P}_{[k]}| \bigg| \textrm{type}(X^k) = P_{\bar{X}} \right] \textbf{1}\left\{X^k \in \mathcal{T}_k\right\}\right] - \exp(-\gamma) \label{step3}\\
\geq&\mathbb{E}\!\left[\mathbb{P}\!\left[k\mathbb{E}\left[\jmath_{X}\!(\bar{X},d_s,d_x)\right] \!+\! \lambda_{s,X}\!\sqrt{k \textrm{Var}\left[\textsf{d}_s(S, Z^\star)|X, Z^\star \right]} G \!\geq\! \log M \!+\! a_k\right] \!\textbf{1}\!\left\{X^k \!\in\! \mathcal{T}_k\right\}\!\right] \!-\! \frac{K}{\sqrt{k}} \!-\! K_1 \frac{\log k}{\sqrt{k}} \!-\! \exp(-\gamma) \label{step4}\\
\geq&\mathbb{P}\!\left[\sum_{i=1}^k\jmath_{X}(X_i,d_s,d_x) \!+\! \lambda_{s,X}\sqrt{k \textrm{Var}\left[\textsf{d}_s(S, Z^\star)|X, Z^\star \right]} G \!\geq\! \log M \!+\! a_k\right] \!-\! \mathbb{P}\left[X^k \notin \mathcal{T}_k\right]\!-\! \frac{K}{\sqrt{k}} \!-\! K_1 \frac{\log k}{\sqrt{k}} \!-\! \exp(-\gamma) \label{step5}\\
\geq&\epsilon_k - \frac{B}{\sqrt{k+1}} - \mathbb{P}\left[X^k \notin \mathcal{T}_k\right] 
- \frac{K}{\sqrt{k}} - K_1 \frac{\log k}{\sqrt{k}} - \exp(-\gamma) \label{step6}\\
\geq& \epsilon_k - \frac{B + 2|\mathcal{A}| + K + K_1 \log k + 1}{\sqrt{k}} \label{step7}
\end{align}
	\hrulefill
\end{figure*}
where 
\begin{itemize}
	\item \eqref{step1} is by \eqref{block_converse_bound2};
	\item \eqref{step2} is by the observation that $\sum_{i=1}^k W_i$ remains unchanged for fixed $P_{\bar{X}\bar{Z}\bar{Y}}$;
	\item \eqref{step4} is by \eqref{Q_lower_bound_1}, \eqref{Q_lower_bound1} and \eqref{lower_bound_Q_2};
	\item \eqref{step5} is by the union bound and the definition of $\mathbb{E}\left[\jmath_{X}(\bar{X},d_s,d_x)\right]$;
	\item \eqref{step6} is by the Berry-Ess\'een theorem (Lemma \ref{Berry_Esseen}), the choice of $M$ in \eqref{log_M}, and Proposition \ref{Prop_relationship_V_tildeV}, where $B$ is that in Lemma \ref{Berry_Esseen};
	\item \eqref{step7} is by the choice of $\gamma$ in \eqref{gamma_value} and property \eqref{Xk_doesnot_typical}.
\end{itemize}
Our proof completes by letting
\begin{equation}
\epsilon_k = \epsilon + \frac{B + 2|\mathcal{A}| + K + K_1 \log k + 1}{\sqrt{k}}.
\end{equation}

\bibliographystyle{IEEEtran}
\bibliography{ref}

\end{document}